\numberwithin{equation}{section}
\theoremstyle{plain}
\newtheorem{cor}{Corollary}[section]
\newtheorem{prop}{Proposition}[section]
\newtheorem{defi}{Definition}[section]
\theoremstyle{remark}
\newtheorem{rem}{Remark}[section]
\newtheorem{ex}{Example}[section]
\newcommand{\cB}{\mathcal{B}}
\newcommand{\cE}{\mathcal{E}}
\newcommand{\cG}{\mathcal{G}}
\newcommand{\cI}{\mathcal{I}}
\newcommand{\cK}{\mathcal{K}}
\newcommand{\cM}{\mathcal{M}}
\newcommand{\cN}{\mathcal{N}}
\newcommand{\cT}{\mathcal{T}}
\newcommand{\cW}{\mathcal{W}}
\newcommand{\cGIG}{\cG\cI\cG}
\newcommand{\cMGIG}{\cM\cG\cI\cG}
\newcommand{\cMN}{\cM\cN}
\newcommand{\dC}{\mathbb{C}}
\newcommand{\dE}{\mathbb{E}}
\newcommand{\dP}{\mathbb{P}}
\newcommand{\dR}{\mathbb{R}}
\newcommand{\dS}{\mathbb{S}}
\newcommand{\dV}{\mathbb{V}}
\newcommand{\dX}{\mathbb{X}}
\newcommand{\dY}{\mathbb{Y}}
\newcommand{\gdX}{\barbelow{\dX}}
\newcommand{\diag}{\textnormal{diag}}
\newcommand{\dd}{\mathrm{d}}
\newcommand{\de}{\mathrm{e}}
\newcommand{\Oy}{\Omega_{y}}
\newcommand{\oy}{\omega_{y}}
\newcommand{\Oyi}{\Oy^{-1}}
\newcommand{\Ox}{\Omega_{x}}
\newcommand{\D}{\Delta}
\newcommand{\Dt}{\wt{\D}}
\newcommand{\gD}{\barbelow{\D}}
\newcommand{\indep}{\protect\mathpalette{\protect\independenT}{\perp}}
\newcommand\barbelow[1]{\stackunder[1.2pt]{$#1$}{\rule{.8ex}{.075ex}}}
\def\independenT#1#2{\mathrel{\rlap{$#1#2$}\mkern2mu{#1#2}}}
\newcommand{\simindep}{\overset{\indep}{\sim}}
\newcommand{\hsp}{\hspace{0.5cm}}
\newcommand{\ind}{\mathbbm{1}}
\newcommand{\tr}{\textnormal{tr}}
\renewcommand{\vec}{\textnormal{vec}}
\newcommand{\card}{\textnormal{Card}}
\newcommand{\wt}{\widetilde}
\newcommand{\corr}{\dC\textnormal{orr}}
\newcommand{\limp}{~ \overset{\dP}{\longrightarrow} ~}
\def\leq{\leqslant}
\def\geq{\geqslant}
\begin{document}

\title[A Bayesian PGGM with sparsity]
{A Bayesian approach for partial Gaussian graphical models with sparsity
\vspace{2ex}}
\author[E. Okome Obiang]{Eunice Okome Obiang}
\address{Univ Angers, CNRS, LAREMA, SFR MATHSTIC, F-49000 Angers, France.}
\email{okome@math.univ-angers.fr}
\author[P. J\'ez\'equel]{Pascal J\'ez\'equel}
\address{1 Unit\'e de Bioinfomique, Institut de Canc\'erologie de l'Ouest, Bd Jacques Monod, 44805 Saint Herblain Cedex, France.\vspace{-2ex}}
\address{2 SIRIC ILIAD, Nantes, Angers, France. \vspace{-2ex}}
\address{3 CRCINA, INSERM, CNRS, Universit\'e de Nantes, Universit\'e d'Angers, Institut de Recherche en Sant\'e-Universit\'e de Nantes, 8 Quai Moncousu - BP 70721, 44007, Nantes Cedex 1, France.}
\email{pascal.jezequel@ico.unicancer.fr}
\author[F. Pro\"ia]{Fr\'ed\'eric Pro\"ia}
\address{Univ Angers, CNRS, LAREMA, SFR MATHSTIC, F-49000 Angers, France.}
\email{frederic.proia@univ-angers.fr}

\thanks{}
\keywords{High-dimensional linear regression, Partial graphical model, Partial correlation, Bayesian approach, Sparsity, Spike-and-slab, Gibbs sampler.}

\begin{abstract}
We explore various Bayesian approaches to estimate partial Gaussian graphical models. Our hierarchical structures enable to deal with single-output as well as multiple-output linear regressions, in small or high dimension, enforcing either no sparsity, sparsity, group sparsity or even sparse-group sparsity for a bi-level selection through partial correlations (direct links) between predictors and responses, thanks to spike-and-slab priors corresponding to each setting. Adaptative and global shrinkages are also incorporated in the Bayesian modeling of the direct links. An existing result for model selection consistency is reformulated to stick to our sparse and group-sparse settings, providing a theoretical guarantee under some technical assumptions. Gibbs samplers are developed and a simulation study shows the efficiency of our models which give very competitive results, especially in terms of support recovery. To conclude, a real dataset is investigated.
\end{abstract}

\maketitle

\vspace{-0.5cm}

\begin{center}
\textit{AMS 2020 subject classifications: Primary 62A09, 62F15; Secondary 62J05.}
\end{center}

\medskip

\section{Introduction and Motivations}
\label{SecIntro}

This paper is devoted to the Bayesian estimation of the partial Gaussian graphical models. Graphical models are now widespread in many contexts, like image analysis, economics or biological regulation networks, neural models, etc. A graphical model for the $d$-dimensional Gaussian vector $Z \sim \cN_d(\mu, \Sigma)$ is a model where the conditional dependencies between the coordinates of $Z$ are represented by means of a graph. We refer the reader to the handbook recently edited by Maathuis \textit{et al.} \cite{MaathuisEtAl18} for a very complete survey of graphical models theory, or to Chap. 7 of Giraud \cite{Giraud14} for a wide introduction to the subject. It is well-known that the partial correlation between $Z_i$ and $Z_j$ satisfies
\begin{equation*}
\corr(Z_i,\, Z_j\, \vert\, Z_{\neq\, i,\, j}) = -\frac{\Omega_{ij}}{\sqrt{\Omega_{ii}\, \Omega_{jj}}}
\end{equation*}
where $\Omega = \Sigma^{-1} \in \dS_{++}^{\, d}$ is the precision matrix of $Z$ (the notation $\dS_{++}^{\, d}$ for the cone of symmetric positive definite matrices of dimension $d$ is used in all the paper). A fundamental consequence of this is that there is a partial correlation between $Z_i$ and $Z_j$ if and only if the $(i,j)$-th element of $\Omega$ is non-zero. The sparse estimation of $\Omega$ is therefore a major issue for variable selection in high-dimensional studies, which has given rise to a substantial literature, see \textit{e.g.} the seminal work of Meinshausen and B\"uhlmann \cite{MeinshausenBuhlmann06}. This logically led numerous authors to investigate interesting properties under various kind of hypotheses, estimation procedures and penalties. Let us mention for example the optimality results obtained by Cai and Zhou \cite{CaiZhou12} and the penalized estimations of Yuan and Lin \cite{YuanLin07}, Rothman \textit{et al.} \cite{RothmanEtAl08}, Banerjee \textit{et al.} \cite{BanerjeeEtAl08}, Cai \textit{et al.} \cite{CaiEtAl11} or Ravikumar \textit{et al.} \cite{RavikumarEtAl11}, all coming with theoretical guarantees, algorithmic considerations and real world examples. Besides, the famous graphical Lasso of Friedman \textit{et al.} \cite{FriedmanEtAl08} has become an essential tool for dealing with precision matrix estimation. Perhaps more attractive to us since focusing on each entry of the precision matrix (no longer taken as a whole), the approach of Ren \textit{et al.} \cite{RenEtAl15} is remarkable and will serve as a basis for comparison in our simulation study. The Bayesian inference counterpart has been developed as well, it is \textit{e.g.} the subject of Chap. 10 of Maathuis \textit{et al.} \cite{MaathuisEtAl18} where various Wishart-type priors are considered for $\Omega$, see also Li \textit{et al.} \cite{LiEtAl19} or Gan \textit{et al.} \cite{GanEtAl19} for spike-and-slab approaches and all references within.

\smallskip

Suppose now that we deal with a multivariate linear regression of the form
\begin{equation*}
\dY = \dX B + E
\end{equation*}
where $\dY \in \dR^{n \times q}$ is a matrix of $q$-dimensional responses of which $k$-th row is $Y_k^{\, t}$, $\dX \in \dR^{n \times p}$ is a matrix of $p$-dimensional predictors of which $k$-th row is $X_k^{\, t}$, $B \in \dR^{p \times q}$ contains the regression coefficients and $E \in \dR^{n \times q}$ is a matrix-variate Gaussian noise. The Partial Gaussian Graphical Model (PGGM), developped \textit{e.g.} by Sohn and Kim \cite{SohnKim12} or Yuan and Zhang \cite{YuanZhang14}, appears as a powerful tool to exhibit relationships between predictors and responses that exist through partial correlations (called from now on `direct links', as opposed to `indirect links' resulting from correlations). Indeed, assume that the couple $(Y_k, X_k) \in \dR^{q+p}$ is jointly normally distributed with zero mean, covariance $\Sigma$ and precision $\Omega$. Then, the block decomposition given by
\begin{equation*}
\Omega = \begin{pmatrix}
\Oy & \D \\
\D^t & \Ox
\end{pmatrix}
\end{equation*}
with $\Oy \in \dS_{++}^{\, q}$, $\D \in \dR^{q \times p}$ and $\Ox \in \dS_{++}^{\, p}$ leads to $Y_k\, \vert\, X_k\, \sim\, \cN_q(-\Oyi \D\, X_k,\, \Oyi)$. This is a crucial remark because one can see that the multiple-output regression $Y_k = B^{\, t}\, X_k + E_k$ with Gaussian noise $E_k \sim \cN_q(0, R)$ may be reparametrized with
\begin{equation}
\label{Reparam}
B = -\D^t\, \Oyi \hsp \text{and} \hsp R = \Oyi.
\end{equation}
A large volume of literature exists for the sparse estimation of $B$ with arbitrary group structures (see \textit{e.g.} Li \textit{et al.} \cite{LiEtAl15} or Chap. 6 of Giraud \cite{Giraud14}), but we will not tackle this issue in our study. At least not frontally but indirectly, since the latter relations show that an estimation of $B$ is possible through the one of the pair $(\Oy, \D)$. Whereas $B$ contains direct and indirect links between the predictors and the responses (due \textit{e.g.} to strong correlations among the variables), $\D$ is clearly more interesting from an inferential point of view for it only contains direct links. However, while the estimation of $(\Oy, \D)$ appears to be essential, it usually depends on the accuracy of the estimation of the whole precision matrix, which, in turn, may be strongly affected by the one of $\Ox$. For example, the graphical Lasso of Friedman \textit{et al.} \cite{FriedmanEtAl08} involves maximizing the log-likelihood penalized by the elementwise $\ell_1$ norm of $\Omega$. For multiple-output high-dimensional regressions where generally $p \gg q$, we understand that a significant bias is likely to result from the large-scale shrinkage. Another substantial advantage of the partial model is that we can override this issue by computing a new objective function in which $\Ox$ has disappeared, \textit{i.e.} the penalized log-likelihood
\begin{eqnarray}
\label{LikPGGM}
L_{n}(\Oy, \D) ~ = ~ -\ln \det(\Oy) + \tr(S_y\, \Oy) + 2\, \tr(S_{yx}^{\, t}\, \D) \hsp \hsp \hsp \hsp \hsp \hsp \nonumber \\
\hsp \hsp \hsp \hsp \hsp \hsp +~ \tr(S_x\, \D^t\, \Oyi\, \D) + \lambda\, \textnormal{pen}(\Oy) + \mu\, \textnormal{pen}(\D)
\end{eqnarray}
where $S_x \in \dS_{++}^{\, p}$ and $S_y \in \dS_{++}^{\, q}$ are the empirical variances of the responses and the predictors, respectively, and where $S_{yx} \in \dR^{q \times p}$ is the empirical covariance, computed on the basis of a set of $n$ observations. This can be obtained either by considering the multiple-output Gaussian regression scheme, or, as it is done by Yuan and Zhang \cite{YuanZhang14}, by eliminating $\Ox$ thanks to a first optimization step in the objective function of the graphical model. The usual convex penalties are elementwise $\ell_1$ norms, possibly deprived of the diagonal terms for $\Oy$. This paved the way to the recent study of Chiquet \textit{et al.} \cite{ChiquetEtAl17} where the authors replace the penalty on $\Oy$ by a structuring one enforcing various kind of sparsity patterns in $\D$, and to the one of Okome Obiang \textit{et al.} \cite{OkomeEtAl21} in which some theoretical guarantees are provided for a slightly more general estimation procedure.

\smallskip

However, to the best of our knowledge, the Bayesian approach for the PGGM is a new research topic. Given the outputs gathered in $\dY$ and the predictors gathered in $\dX$, the objective of this paper is the Bayesian estimation of the direct links and the precision matrix of the responses. This is inspired by the ideas of Xu and Ghosh \cite{XuGosh15} for the single-output setting ($q=1$), and by the ones of Liquet \textit{et al.} \cite{LiquetEtAl17} for the multiple-output setting ($q > 1$). Taking advantage of the relations \eqref{Reparam}, we consider that a Gaussian prior for $B$ must remain Gaussian for $\D$ (with a correctly updated variance), and that an inverse Wishart prior for $R$ merely becomes a Wishart one for $\Oy$. Yet, despite these seemingly small changes in the design of the priors, we will see that the resulting distributions are completely different. The hierarchical models that we are going to study all come from this working base, but let us point out that a wide variety of refinements exist in the recent literature for Bayesian sparsity, like the grouped `horseshoe' of Xu \textit{et al.} \cite{XuEtAl16}, the `aggressive' multivariate Dirichlet-Laplace prior of Wei \textit{et al.} \cite{WeiEtAl20}, the theoretical results for group selection consistency of Yang and Narisetty \cite{YangNarisetty20} or even the extension of the Bayesian spike-and-slab group selection to generalized additive models of Bai \textit{et al.} \cite{BaiEtAl20}, all related to the regression setting but that might also be investigated for PGGMs. To enforce various types of sparsity in $\D$ for high-dimensional problems, we decided to make use of spike-and-slab priors, with a spike probability guided by a conjugate Beta distribution.

\smallskip

The paper is organized as follows. Sections \ref{SecS}, \ref{SecGS} and \ref{SecSGS} are dedicated to the study of our hierarchical models enforcing either no sparsity, sparsity, group sparsity or sparse-group sparsity in the direct links, respectively, according to the terminology of Sec. 2.1 of Giraud \cite{Giraud14}. In particular, we will see that our bi-level selection clearly diverges from the strategy of Liquet \textit{et al.} \cite{LiquetEtAl17}. We also adapt the reasoning of Yang and Narisetty \cite{YangNarisetty20} to establish group selection consistency under some technical assumptions and an appropriate amount of sparsity. Section \ref{SecPost} is devoted to the conditional posterior distributions of the parameters in order to implement Gibbs samplers that are tested in Section \ref{SecEmp}. This empirical section is focused on a simulation study first, to evaluate and compare the efficiency of the models, then a real dataset is treated, and a short conclusion ends the paper. But, firstly, let us give some examples of what exactly we mean by `sparse', `group-sparse' and `sparse-group-sparse' settings, and let us summarize the definitions that we have chosen to retain for the well-known distributions as well as for the less usual ones, in order to avoid any misinterpretation of our results and proofs.

\begin{ex}
To explain a set of phenotypic traits, suppose that we investigate a large collection of genetic markers spread over twenty chromosomes. For coordinate sparsity (`sparse' setting), only a few markers are active. For group sparsity (`group-sparse' setting), the markers are clustered into groups (formed by chromosomes) and only a few of them are active. For sparse-group sparsity (`sparse-group-sparse' setting), only a few chromosomes are active and they are sparse, the result is a bi-level selection (chromosomes and markers). This will be the context of our example on real data (Section \ref{SecEmpReal}).
\end{ex}

\begin{defi}[Gaussian]
\label{DefN}
The density of $X \in \dR^{d_1 \times d_2}$ following the matrix normal distribution $\cMN_{d_1 \times d_2}(M,\, \Sigma_1,\, \Sigma_2)$ is given by
\begin{equation*}
p(X) = \frac{1}{(2 \pi)^{\frac{d_1\, d_2}{2}}\, \vert \Sigma_1 \vert^{\frac{d_2}{2}}\, \vert \Sigma_2 \vert^{\frac{d_1}{2}}}\, \exp\!\left(\! -\frac{1}{2}\, \tr\!\left( \Sigma_2^{-1} (X - M)^t\, \Sigma_1^{-1} (X - M) \right) \right)
\end{equation*}
where $M \in \dR^{d_1 \times d_2}$, $\Sigma_1 \in \dS_{++}^{\, d_1}$ and $\Sigma_2 \in \dS_{++}^{\, d_2}$. When $d_2=1$, this is a multivariate normal distribution $\cN_{d}(\mu,\, \Sigma)$ with $d = d_1$, $\mu = M$ and $\Sigma = \Sigma_2^{-1} \Sigma_1$, having density
\begin{equation*}
p(X) = \frac{1}{(2 \pi)^{\frac{d}{2}}\, \vert \Sigma \vert^{\frac{1}{2}}}\, \exp\!\left(\! -\frac{1}{2}\, (X - \mu)^t\, \Sigma^{-1} (X - \mu) \right)
\end{equation*}
where $\mu \in \dR^{d}$ and $\Sigma \in \dS_{++}^{\, d}$.
\end{defi}

\begin{defi}[Generalized Inverse Gaussian]
\label{DefGIG}
The density of $X \in \dS_{++}^{\, d}$ following the matrix generalized inverse Gaussian distribution $\cMGIG_d(\nu,\, A,\, B)$ is given by
\begin{equation*}
p(X) = \frac{\vert X \vert^{\nu - \frac{d+1}{2}}}{\big\vert \frac{A}{2} \big\vert^\nu\, B_\nu\big( \frac{A}{2}, \frac{B}{2} \big)}\, \exp\!\left(\! -\frac{1}{2}\, \tr\!\left( A\, X^{-1} + B\, X \right) \right)\! \ind_{\{ X\, \in\, \dS_{++}^{\, d} \}}
\end{equation*}
where $\nu \in \dR$, $A \in \dS_{++}^{\, d}$, $B \in \dS_{++}^{\, d}$ and $B_\nu$ is a Bessel-type function of order $\nu$. When $d=1$, this is a generalized inverse Gaussian distribution $\cGIG(\nu,\, a,\, b)$ with $a=A$ and $b=B$, having density
\begin{equation*}
p(X) = \frac{X^{\nu - 1}}{\big( \frac{a}{2} \big)^{\! \nu}\, B_\nu\big( \frac{a}{2}, \frac{b}{2} \big)}\, \de^{-\frac{a}{2 X} - \frac{b\, X}{2}}\, \ind_{\{ X\, >\, 0 \}}
\end{equation*}
where $\nu \in \dR$, $a > 0$ and $b > 0$.
\end{defi}

\begin{defi}[Wishart/Gamma/Exponential]
\label{DefW}
The density of $X \in \dS_{++}^{\, d}$ following the matrix Wishart distribution $\cW_d(u,\, V)$ is given by
\begin{equation*}
p(X) = \frac{\vert X \vert^{\frac{u-d-1}{2}}}{2^{\frac{d\, u}{2}}\, \Gamma_d\big( \frac{u}{2} \big)\, \vert V \vert^{\frac{u}{2}}}\, \exp\!\left(\! -\frac{1}{2}\, \tr\!\left( V^{-1} X \right) \right)\! \ind_{\{ X\, \in\, \dS_{++}^{\, d} \}}
\end{equation*}
where $u > d-1$, $V \in \dS_{++}^{\, d}$ and $\Gamma_d$ is the multivariate Gamma function of order $d$. When $d=1$, this is a Gamma distribution $\Gamma(a,\, b)$ with $a=\frac{u}{2}$ and $\frac{1}{b} = 2\, V$, having density
\begin{equation*}
p(X) = \frac{b^{\, a}\, X^{a-1}}{\Gamma(a)}\, \de^{-b\, X}\, \ind_{\{ X\, >\, 0 \}}
\end{equation*}
where $a > 0$ and $b > 0$. The exponential distribution $\cE(\ell)$ is then defined as the $\Gamma(1,\, \ell)$ distribution, for $\ell > 0$.
\end{defi}

\begin{defi}[Beta]
\label{DefB}
The density of $X \in [0,1]$ following the Beta distribution $\beta(a,\, b)$ is given by
\begin{equation*}
p(X) = \frac{X^{a-1}\, (1-X)^{\, b-1}}{\beta(a,b)}\, \ind_{\{ 0\, \leq\, X\, \leq\, 1 \}}
\end{equation*}
where $a > 0$, $b > 0$ and $\beta$ is the Beta function.
\end{defi}

In all the paper, data and parameters are gathered in $\Theta = \{ \dY, \dX, \D, \Oy, \nu, \lambda, \pi \}$ and, to standardize, for any $e \in \Theta$, we note $\Theta_{e} = \Theta \backslash \{ e \}$. 

\section{The sparse setting}
\label{SecS}

In this section, $\lambda_i \in \dR$ is the $i$-th component of $\lambda \in \dR^p$, $\D_i \in \dR^q$ is the $i$-th column of $\D$ and $\dX_i \in \dR^n$ stand for the $i$-th column of $\dX$ ($1 \leq i \leq p$). Let us consider the hierarchical Bayesian model, where the columns of $\D$ are assumed to be independent, given by
\begin{equation}
\label{ModHieraS}
\left\{
\begin{array}{lcl}
\dY\, \vert\, \dX, \D, \Oy & \sim & \cMN_{n \times q}(-\dX\, \D^t\, \Oyi,\, I_{n},\, \Oyi) \\
\D_i\, \vert\, \Oy, \lambda_i, \pi & \simindep & (1-\pi)\, \cN_q(0,\, \lambda_i\, \Oy) + \pi\, \delta_0 \\
\lambda_i & \simindep & \Gamma(\alpha,\, \ell_i) \\
\Oy & \sim & \cW_q(u,\, V) \\
\pi & \sim & \beta(a,\, b)
\end{array}
\right.
\end{equation}
for $i \in \llbracket 1, p \rrbracket$, with hyperparameters $\alpha = \frac{1}{2} (q+1)$, $\ell_i > 0$, $u > q-1$, $V \in \dS^{\, q}_{++}$, $a > 0$ and $b > 0$. A general ungrouped sparsity is promoted in the columns of $\D$ through the spike-and-slab prior. In this mixture model, $\pi$ is the prior spike probability and $\lambda$ is an adaptative shrinkage factor acting at the predictor scale ($\lambda_i$ is associated with the direct links between predictor $i$ and all the responses). When $\ell_i = \ell$ for all $i$, we will rather speak of global shrinkage. The degree of sparsity will be characterized by the number $N_0$ of zero columns of $\D$, that is
\begin{equation}
\label{N0}
N_0 = \card(i,\, \D_i = 0) = \sum_{i=1}^{p} \ind_{\{ \D_i\, =\, 0 \}}.
\end{equation}
To implement a Gibbs sampler from the full posterior distribution stemming from \eqref{ModHieraS}, we may use the conditional distributions given in the proposition below.

\begin{prop}
\label{PropPostS}
In the hierarchical model \eqref{ModHieraS}, the conditional posterior distributions are as follows.
\begin{itemize}[label=$-$, leftmargin=*]
\item The parameter $\D$ satisfies, for $i \in \llbracket 1, p \rrbracket$,
\begin{equation*}
\D_i\, \vert\, \Theta_{\D_i} ~\sim~ (1-p_i)\, \cN_q\!\left( -s_i\, H_i,\, s_i\, \Oy \right) + p_i\, \delta_0
\end{equation*}
where
\begin{equation*}
H_i = \Oy\, \dY^{\, t}\, \dX_i + \sum_{j\, \neq\, i} \langle \dX_i,\, \dX_j \rangle\, \D_j, \hsp s_i = \frac{\lambda_i}{1 + \lambda_i\, \Vert \dX_i \Vert^{\, 2}}
\end{equation*}
and
\begin{equation*}
p_i = \frac{\pi}{\pi + (1-\pi)\, (1 + \lambda_i\, \Vert \dX_i \Vert^{\, 2})^{-\frac{q}{2}}\, \exp\!\Big( \frac{s_i\, H_i^{\, t}\, \Oyi H_i}{2} \Big)}.
\end{equation*}
\item The parameter $\Oy$ satisfies
\begin{equation*}
\Oy\, \vert\, \Theta_{\Oy} ~\sim~ \cMGIG_q\!\left( \frac{n-p+N_0 + u}{2},\, \D\, (\dX^t\, \dX + D_\lambda^{-1})\, \D^t,\, \dY^{\, t}\, \dY + V^{-1} \right)
\end{equation*}
where $D_\lambda = \diag(\lambda_1, \hdots, \lambda_p)$.
\item The parameter $\lambda$ satisfies, for $i \in \llbracket 1, p \rrbracket$,
\begin{equation*}
\lambda_i\, \vert\, \Theta_{\lambda_i} ~\sim~ \ind_{\{ \D_i\, \neq\, 0 \}}\, \cGIG\!\left( \frac{1}{2},\, \D_i^t\, \Oyi \D_i,\, 2\, \ell_i \right) + \ind_{\{ \D_i\, =\, 0 \}}\, \Gamma(\alpha,\, \ell_i).
\end{equation*}
\item The parameter $\pi$ satisfies
\begin{equation*}
\pi\, \vert\, \Theta_{\pi} ~\sim~ \beta\big( N_0+a,\, p-N_0+b \big).
\end{equation*}
\end{itemize}
\end{prop}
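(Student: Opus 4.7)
The plan is to apply Bayes' rule one parameter at a time, identifying each conditional by recognizing its functional form. The common ingredient is an explicit expansion of the matrix-normal likelihood: from Definition \ref{DefN},
\begin{equation*}
p(\dY\,\vert\,\dX,\D,\Oy) \propto \vert \Oy \vert^{n/2}\, \exp\!\Bigl(\!-\tfrac{1}{2}\tr(\Oy\,\dY^{\,t}\dY) - \tr(\D\,\dX^{t}\dY) - \tfrac{1}{2}\tr(\D\,\dX^{t}\dX\,\D^{t}\Oyi)\Bigr).
\end{equation*}
I would then expand the two $\D$-dependent traces column-by-column using $\tr(\D\,\dX^{t}\dY) = \sum_{i} \D_{i}^{t}\, \dY^{\,t}\dX_{i}$ and $\tr(\D\,\dX^{t}\dX\,\D^{t}\Oyi) = \sum_{i,j} \langle \dX_{i}, \dX_{j}\rangle\, \D_{j}^{t}\,\Oyi\,\D_{i}$, which isolate the $\D_i$-dependence needed for the first update and simultaneously expose the $\Oyi$-dependence needed for the second.

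For $\D_{i}$, grouping terms yields a quadratic with coefficient $\Vert \dX_{i}\Vert^{2} + \lambda_{i}^{-1} = s_{i}^{-1}$ on $\D_{i}^{t}\Oyi\D_{i}$ and linear part $2\,\D_{i}^{t}\Oyi H_{i}$ (after rewriting $\dY^{\,t}\dX_{i} = \Oyi(\Oy\,\dY^{\,t}\dX_{i})$ to pull out an $\Oyi$). Completing the square produces the slab $\cN_{q}(-s_{i} H_{i}, s_{i} \Oy)$ together with a residual factor $\exp(s_{i} H_{i}^{t}\Oyi H_{i}/2)$. The spike probability $p_i$ then follows by normalizing the two-component mixture: the slab marginal integrates to $(2\pi)^{q/2} (s_i/\lambda_i)^{q/2} \exp(s_{i} H_{i}^{t}\Oyi H_{i}/2)$ on the same scale as the spike marginal $1$, and since $s_i/\lambda_i = (1 + \lambda_i \Vert \dX_i\Vert^{2})^{-1}$, this delivers exactly the stated $p_i$. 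In my view the main technical risk is here, in keeping the determinants $\vert \lambda_i \Oy \vert^{-1/2}$ and $\vert s_i \Oy \vert^{1/2}$ on the same scale so that the $\vert \Oy \vert$ factors cancel cleanly and only $(1+\lambda_i \Vert \dX_i\Vert^2)^{-q/2}$ survives.

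For $\Oy$, multiply the likelihood by the Wishart prior and by the $p-N_{0}$ non-degenerate slab priors on the columns $\D_{i}\neq 0$. The total power of $\vert \Oy \vert$ becomes $(n - (p - N_{0}) + u - q - 1)/2$, and since the zero columns contribute nothing we have $\sum_{i : \D_{i}\neq 0} \lambda_{i}^{-1} \D_{i}^{t}\Oyi\D_{i} = \tr(\D\,D_{\lambda}^{-1}\D^{t}\Oyi)$. The bilinear pieces in $\Oy$ and $\Oyi$ therefore collect into $\tr(\Oy(\dY^{\,t}\dY + V^{-1}))$ and $\tr(\D(\dX^{t}\dX + D_{\lambda}^{-1})\D^{t}\Oyi)$, so matching with Definition \ref{DefGIG} yields the claimed $\cMGIG_q$ with $\nu = (n - p + N_0 + u)/2$.

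Finally, for $\lambda_{i}$, when $\D_{i} = 0$ the likelihood no longer depends on $\lambda_{i}$ (the spike carries no information), so the conditional reduces to the Gamma prior; when $\D_{i}\neq 0$, combining the prior factor $\lambda_i^{(q-1)/2}\,\exp(-\ell_i \lambda_i)$ with the slab factor $\lambda_i^{-q/2}\,\exp(-\D_i^{t}\Oyi\D_i/(2\lambda_i))$ gives $\lambda_i^{-1/2}\,\exp(-\ell_i\lambda_i - \D_i^{t}\Oyi\D_i/(2\lambda_i))$, which is $\cGIG(1/2,\, \D_i^{t}\Oyi\D_i,\, 2\ell_i)$ by Definition \ref{DefGIG}. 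For $\pi$, introducing latent Bernoulli indicators $z_i = \ind_{\{\D_i \neq 0\}}$ (slab vs.\ spike) reduces the update to standard Beta--Bernoulli conjugacy with $p - N_0$ slab allocations and $N_0$ spike allocations, yielding $\beta(N_0 + a,\, p - N_0 + b)$.
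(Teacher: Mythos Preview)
Your proposal is correct and follows essentially the same route as the paper's proof: write the full posterior, expand the matrix-normal likelihood column-by-column, then complete the square for $\D_i$, collect $\Oy$/$\Oyi$ terms for the $\cMGIG_q$, and read off the $\cGIG$/Gamma and Beta conditionals directly. One small bookkeeping slip: the factor $(2\pi)^{q/2}$ you report in the slab marginal should cancel against the $(2\pi)^{-q/2}$ normalizing constant of the slab prior $\cN_q(0,\lambda_i\Oy)$, so that only $(s_i/\lambda_i)^{q/2}=(1+\lambda_i\Vert\dX_i\Vert^2)^{-q/2}$ survives and the stated $p_i$ is indeed recovered --- exactly the ``technical risk'' you flagged.
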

\begin{proof}
See Section \ref{SecPostS}.
\end{proof}

\begin{rem}
\label{RemLaplace}
The Bayesian Lasso, as introduced \textit{e.g.} in Sec. 6.1 of \cite{HastieEtAl15} or in \cite{ParkCasella08}, assumes a prior Laplace distribution for the regression coefficients conditional on the noise variance. In our case, $\D_i\, \vert\, \Oy, \pi$ is still a multivariate spike-and-slab (after integrating over $\lambda_i$), with a slab following a so-called multivariate $K$-distribution (see \cite{EltoftEtAl06}), which is a generalization of the multivariate Laplace distribution. See \textit{e.g.} Sec 2.1 of \cite{LiquetEtAl17}. From this point of view, our study is in line with the usual Bayesian regression schemes. Perhaps even more interesting, going on with the idea of the authors, suppose that, for all $1 \leq i \leq p$, $\D_i = b_i\, \D_i^{*}$ where $\D_i^{*}$ follows the multivariate $K$-distribution described above and $b_i\, \vert\, \pi \sim \cB(1-\pi)$ is independent of $\D_i^{*}$. Now, the sparsity in $\D$ is not induced by a spike-and-slab strategy anymore but, equivalently, by multiplying the slab part by an independent Bernoulli variable being 0 with probability $\pi$. Then, it is possible to show that the negative log-likelihood of this alternative hierarchical model is given, up to an additive constant that does not depend on $\D$, by
\begin{equation*}
\frac{1}{2}\, \left\Vert (\dY + \dX\, \D^t\, \Oyi)\, \Oy^{\frac{1}{2}} \right\Vert_{F}^2+ \sum_{i=1}^p c_i \left\Vert \Oy^{-\frac{1}{2}} \D_i^{*} \right\Vert_{F} + \ln\!\left( \frac{1-\pi}{\pi} \right) \sum_{i=1}^p b_i
\end{equation*}
where $c_i > 0$. We first recognize an $\ell_2$-type penalty but also an $\ell_0$-type penalty on $\D$ (provided that $\pi < \frac{1}{2}$) since summing the $b_i$ amounts to counting the number of non-zero columns in $\D$. Consequently, there is a close connection between our hierarchical Bayesian model and the regressions penalized by $\ell_2$ and $\ell_0$ norms, problems that are known to be very hard to solve due to combinatorial optimization.
\end{rem}

The particular case $q=1$ is a very useful corollary of the proposition. Here, the direct links form a row vector such that $\D^t \in \dR^p$ with components $\D_i \in \dR$ ($1 \leq i \leq p$), and the precision matrix of the responses reduces to $\oy > 0$. According to the parametrization of the distributions (see Section \ref{SecIntro}), the corresponding prior distribution of $\oy$ is $\Gamma(\frac{u}{2}, \frac{1}{2\, v})$ for $u, v > 0$ and the one of $\lambda_i$ is $\cE(\ell_i)$ for $\ell_i > 0$. The other priors are unchanged.

\begin{cor}
\label{CorPostS1}
In the hierarchical model \eqref{ModHieraS} with $q=1$, the conditional posterior distributions are as follows.
\begin{itemize}[label=$-$, leftmargin=*]
\item The parameter $\D$ satisfies, for $i \in \llbracket 1, p \rrbracket$,
\begin{equation*}
\D_i\, \vert\, \Theta_{\D_i} ~\sim~ (1-p_i)\, \cN\!\left( -s_i\, h_i,\, s_i\, \oy \right) + p_i\, \delta_0
\end{equation*}
where
\begin{equation*}
h_i = \oy\, \langle \dX_i,\, \dY \rangle + \sum_{j\, \neq\, i} \langle \dX_i,\, \dX_j \rangle\, \D_j, \hsp s_i = \frac{\lambda_i}{1 + \lambda_i\, \Vert \dX_i \Vert^{\, 2}}
\end{equation*}
and
\begin{equation*}
p_i = \frac{\pi}{\pi + (1-\pi)\, (1 + \lambda_i\, \Vert \dX_i \Vert^{\, 2})^{-\frac{1}{2}}\, \exp\!\Big( \frac{s_i\, h_i^2}{2\, \oy} \Big)}.
\end{equation*}
\item The parameter $\oy$ satisfies
\begin{equation*}
\oy\, \vert\, \Theta_{\oy} ~\sim~ \cGIG\!\left( \frac{n-p+N_0+u}{2},\, \D\, (\dX^t\, \dX + D_\lambda^{-1})\, \D^t,\, \Vert \dY \Vert^{\, 2} + \frac{1}{v} \right)
\end{equation*}
where $D_\lambda = \diag(\lambda_1, \hdots, \lambda_p)$.
\item The parameter $\lambda$ satisfies, for $i \in \llbracket 1, p \rrbracket$,
\begin{equation*}
\lambda_i\, \vert\, \Theta_{\lambda_i} ~\sim~ \ind_{\{ \D_i\, \neq\, 0 \}}\, \cGIG\!\left( \frac{1}{2},\, \frac{\D_i^{\, 2}}{\oy},\, 2\, \ell_i \right) + \ind_{\{ \D_i\, =\, 0 \}}\, \cE(\ell_i).
\end{equation*}
\item The parameter $\pi$ satisfies
\begin{equation*}
\pi\, \vert\, \Theta_{\pi} ~\sim~ \beta\big( N_0+a,\, p-N_0+b \big).
\end{equation*}
\end{itemize}
\end{cor}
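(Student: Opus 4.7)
The plan is to derive Corollary \ref{CorPostS1} as a direct specialization of Proposition \ref{PropPostS} to $q=1$, reading the scalar reductions of the matrix distributions from Definitions \ref{DefN}--\ref{DefW} and carefully reconciling the Wishart/Gamma parametrization conversion noted in the preamble of the corollary.

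First, I specialize the slab posterior on $\D_i$. When $q=1$, the matrix normal $\cN_q(-s_i H_i, s_i\, \Oy)$ collapses (by Definition \ref{DefN} with $d=1$) to the univariate normal $\cN(-s_i h_i, s_i\, \oy)$, where $h_i$ is just $H_i$ with $\Oy$ replaced by the scalar $\oy$ and $\dY^t \dX_i$ read as $\langle \dX_i, \dY\rangle$. The constant $s_i$ involves only $\lambda_i$ and $\dX_i$, so it is unchanged. In the expression for $p_i$, substituting $q=1$ yields the factor $(1+\lambda_i \Vert \dX_i \Vert^2)^{-1/2}$, and $H_i^t \Oyi H_i$ becomes the scalar $h_i^2/\oy$, giving exactly the stated formula.

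Second, I reduce the matrix posteriors for $\oy$ and $\lambda_i$. By the $d=1$ branch of Definition \ref{DefGIG}, $\cMGIG_1$ is $\cGIG$, so the posterior of $\Oy$ collapses to a $\cGIG$ with the same shape $(n-p+N_0+u)/2$, first scale $\D\, (\dX^t \dX + D_\lambda^{-1})\, \D^t$ (now a scalar since $\D \in \dR^{1 \times p}$), and second scale $\dY^t \dY + V^{-1}$. The preamble of the corollary converts the scalar Wishart $\cW_1(u, V)$ into $\Gamma(u/2,\, 1/(2v))$ via the identification $V = v$, so $V^{-1} = 1/v$ and one obtains $\Vert \dY \Vert^2 + 1/v$. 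For $\lambda_i$, the slab branch $\cGIG(1/2,\, \D_i^t \Oyi \D_i,\, 2\ell_i)$ uses the scalar $\D_i^t \Oyi \D_i = \D_i^2/\oy$, and the spike branch $\Gamma(\alpha, \ell_i)$ with $\alpha = (q+1)/2$ becomes $\Gamma(1, \ell_i) = \cE(\ell_i)$ by Definition \ref{DefW}. The posterior of $\pi$ does not depend on $q$ and is inherited verbatim.

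The only subtle point, and hence the main (mild) obstacle, is keeping the parametrization conventions consistent, specifically the Wishart-to-Gamma conversion $\cW_1(u, V) \equiv \Gamma(u/2,\, 1/(2V))$ and the identification $\alpha = 1$ when $q = 1$ so that the Gamma prior on $\lambda_i$ genuinely becomes $\cE(\ell_i)$; all other steps are textual substitutions in the statement of Proposition \ref{PropPostS}.
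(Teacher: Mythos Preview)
Your proposal is correct and follows exactly the paper's approach: the paper's proof is the single line ``This is a consequence of Proposition \ref{PropPostS}'', and what you have written is simply a careful unpacking of that specialization to $q=1$, with the scalar reductions of $\cMN$, $\cMGIG$ and $\Gamma$ read directly from Definitions \ref{DefN}--\ref{DefW}. Your handling of the only nontrivial bookkeeping---the Wishart-to-Gamma conversion $V \leftrightarrow v$ and the identification $\alpha=(q+1)/2=1$ giving $\cE(\ell_i)$---is accurate.
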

\begin{proof}
This is a consequence of Proposition \ref{PropPostS}.
\end{proof}

Note that we can also easily derive the Bayesian counterpart of the standard PGGM adapted to the small-dimensional case, with no sparsity, by taking $\pi=0$.

\begin{cor}
\label{CorPostNS}
In the hierarchical model \eqref{ModHieraS} with $\pi=0$, the conditional posterior distributions are as follows.
\begin{itemize}[label=$-$, leftmargin=*]
\item The parameter $\D$ satisfies, for $i \in \llbracket 1, p \rrbracket$,
\begin{equation*}
\D_i\, \vert\, \Theta_{\D_i} ~\sim~ \cN_q\!\left( -s_i\, H_i,\, s_i\, \Oy \right)
\end{equation*}
where
\begin{equation*}
H_i = \Oy\, \dY^{\, t}\, \dX_i + \sum_{j\, \neq\, i} \langle \dX_i,\, \dX_j \rangle\, \D_j \hsp \text{and} \hsp s_i = \frac{\lambda_i}{1 + \lambda_i\, \Vert \dX_i \Vert^{\, 2}}.
\end{equation*}
\item The parameter $\Oy$ satisfies
\begin{equation*}
\Oy\, \vert\, \Theta_{\Oy} ~\sim~ \cMGIG_q\!\left( \frac{n-p + u}{2},\, \D\, (\dX^t\, \dX + D_\lambda^{-1})\, \D^t,\, \dY^{\, t}\, \dY + V^{-1} \right)
\end{equation*}
where $D_\lambda = \diag(\lambda_1, \hdots, \lambda_p)$.
\item The parameter $\lambda$ satisfies, for $i \in \llbracket 1, p \rrbracket$,
\begin{equation*}
\lambda_i\, \vert\, \Theta_{\lambda_i} ~\sim~ \cGIG\!\left( \frac{1}{2},\, \D_i^t\, \Oyi \D_i,\, 2\, \ell_i \right).
\end{equation*}
\end{itemize}
\end{cor}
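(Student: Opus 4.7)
The plan is to specialize Proposition \ref{PropPostS} to $\pi = 0$. With $\pi = 0$, the spike-and-slab prior on each column $\D_i$ in \eqref{ModHieraS} degenerates to its slab alone, namely $\cN_q(0, \lambda_i \Oy)$, which is absolutely continuous on $\dR^q$. Consequently, under both the prior and the posterior, one has $\D_i \neq 0$ almost surely for every $i \in \llbracket 1, p \rrbracket$, so that $N_0 = 0$ almost surely, and the Beta posterior on $\pi$ disappears since $\pi$ is no longer a parameter of the model.

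First, I would substitute $\pi = 0$ directly into the expression for the mixture weight $p_i$ given in Proposition \ref{PropPostS}, which yields $p_i = 0$. The conditional posterior of $\D_i$ therefore reduces to its Gaussian slab $\cN_q(-s_i\, H_i,\, s_i\, \Oy)$, with $H_i$ and $s_i$ defined exactly as in that proposition. Next, for $\Oy$, replacing $N_0$ by $0$ in the $\cMGIG_q$ shape $\frac{n - p + N_0 + u}{2}$ produces the parameter $\frac{n - p + u}{2}$ of the corollary, while the two symmetric positive definite parameters $\D(\dX^t \dX + D_\lambda^{-1})\D^t$ and $\dY^t \dY + V^{-1}$ are unchanged. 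Similarly, for $\lambda_i$, the indicator $\ind_{\{\D_i = 0\}}$ vanishes almost surely, so the mixture $\ind_{\{\D_i \neq 0\}}\, \cGIG(\tfrac{1}{2}, \D_i^t \Oyi \D_i, 2\ell_i) + \ind_{\{\D_i = 0\}}\, \Gamma(\alpha, \ell_i)$ collapses to its first component.

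No serious obstacle is expected. The only point worth a brief verification is that setting $\pi = 0$ in the hierarchical prior \eqref{ModHieraS} is equivalent to removing the Dirac spike and keeping the slab only, which is immediate from the definition of a two-component mixture. The remainder of the argument is then a direct rewriting of the formulas of Proposition \ref{PropPostS} under the simplifications $p_i = 0$, $N_0 = 0$ and $\ind_{\{\D_i = 0\}} = 0$ almost surely, and therefore inherits all the computations carried out in the proof of that proposition without requiring any new calculation.
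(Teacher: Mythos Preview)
Your proposal is correct and follows exactly the same route as the paper, which simply states that the corollary is a consequence of Proposition~\ref{PropPostS}. Your explicit verification that $\pi=0$ forces $p_i=0$, $N_0=0$ and $\ind_{\{\D_i=0\}}=0$ almost surely is precisely the specialization the paper leaves implicit.
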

\begin{proof}
This is a consequence of Proposition \ref{PropPostS}.
\end{proof}

In the simulation study of Section \ref{SecEmpSim}, Scen. 0, 1 and 2 are dedicated to the sparse setting. The next section discusses the group sparsity in $\D$.

\section{The group-sparse setting}
\label{SecGS}

The predictors are now ordered in $m$ groups of sizes $\kappa_1 + \hdots + \kappa_m = p$. For the $g$-th group ($1 \leq g \leq m$), $\lambda_g \in \dR$ is the $g$-th component of $\lambda \in \dR^m$, the covariate submatrix is $\gdX_g \in \dR^{n \times \kappa_g}$ and the corresponding slice of $\D$ is $\gD_g \in \dR^{q \times \kappa_g}$. Let us consider the hierarchical Bayesian model, where the columns of $\D$ are assumed to be independent both within and between the groups, given by
\begin{equation}
\label{ModHieraGS}
\left\{
\begin{array}{lcl}
\dY\, \vert\, \dX, \D, \Oy & \sim & \cMN_{n \times q}(-\dX\, \D^t\, \Oyi,\, I_{n},\, \Oyi) \\
\gD_g\, \vert\, \Oy, \lambda_g, \pi & \simindep & (1-\pi)\, \cMN_{q \times \kappa_g}(0,\, \lambda_g\, \Oy, I_{\kappa_g}) + \pi\, \delta_0 \\
\lambda_g & \simindep & \Gamma(\alpha_g,\, \ell_g) \\
\Oy & \sim & \cW_q(u,\, V) \\
\pi & \sim & \beta(a,\, b)
\end{array}
\right.
\end{equation}
for $g \in \llbracket 1, m \rrbracket$, with hyperparameters $\alpha_g = \frac{1}{2} (q\, \kappa_g +1)$, $\ell_g > 0$, $u > q-1$, $V \in \dS^{\, q}_{++}$, $a > 0$ and $b > 0$. A general group sparsity is promoted in the columns of $\D$ through the spike-and-slab prior at the group level. In this mixture model, $\pi$ is the prior spike probability and $\lambda$ is an adaptative shrinkage factor acting at the group scale ($\lambda_g$ is associated with the direct links between the predictors of group $g$ and all the responses). Likewise, when $\ell_g = \ell$ for all $g$, we will rather speak of global shrinkage. Now, the degree of sparsity will be characterized by $N_0$ given in \eqref{N0}, but also by the number $G_0$ of zero groups of $\D$, that is
\begin{equation}
\label{G0}
G_0 = \card(g,\, \gD_g = 0) = \sum_{g=1}^{m} \ind_{\{ \D_g\, =\, 0 \}}.
\end{equation}
To implement a Gibbs sampler from the full posterior distribution stemming from \eqref{ModHieraGS}, we may use the conditional distributions given in the proposition below.

\begin{prop}
\label{PropPostGS}
In the hierarchical model \eqref{ModHieraGS}, the conditional posterior distributions are as follows.
\begin{itemize}[label=$-$, leftmargin=*]
\item The parameter $\D$ satisfies, for $g \in \llbracket 1, m \rrbracket$,
\begin{equation*}
\gD_g\, \vert\, \Theta_{\D_g} ~\sim~ (1-p_g)\, \cMN_{q \times \kappa_g}\!\left( -H_g\, S_g,\, \Oy,\, S_g \right) + p_g\, \delta_0
\end{equation*}
where
\begin{equation*}
H_g = \Oy \dY^{\, t}\, \gdX_g + \sum_{j\, \neq\, g} \gD_j\, \gdX_j^{\, t}\, \gdX_g, \hsp S_g = \lambda_g\, \big( I_{\kappa_g} + \lambda_g\, \gdX_g^{\, t}\, \gdX_g \big)^{-1}
\end{equation*}
and
\begin{equation*}
p_g = \frac{\pi}{\pi + (1-\pi)\, \vert I_{\kappa_g} + \lambda_g\, \gdX_g^{\, t}\, \gdX_g \vert^{-\frac{q}{2}}\, \exp\!\Big( \frac{\tr(H_g^{\, t}\, \Oyi H_g\, S_g)}{2} \Big)}.
\end{equation*}
\item The parameter $\Oy$ satisfies
\begin{equation*}
\Oy\, \vert\, \Theta_{\Oy} ~\sim~ \cMGIG_q\!\left( \frac{n-p+N_0 + u}{2},\, \D\, (\dX^t\, \dX + D_\lambda^{-1})\, \D^t,\, \dY^{\, t}\, \dY + V^{-1} \right)
\end{equation*}
where $D_\lambda = \diag(\lambda_1, \hdots, \lambda_1, \hdots, \lambda_m, \hdots, \lambda_m)$ with each $\lambda_g$ duplicated $\kappa_g$ times.
\item The parameter $\lambda$ satisfies, for $g \in \llbracket 1, m \rrbracket$,
\begin{equation*}
\lambda_g\, \vert\, \Theta_{\lambda_g} ~\sim~ \ind_{\{ \D_g\, \neq\, 0 \}}\, \cGIG\!\left( \frac{1}{2},\, \tr(\gD_g^t\, \Oyi \gD_g),\, 2\, \ell_g \right) + \ind_{\{ \D_g\, =\, 0 \}}\, \Gamma( \alpha_g,\, \ell_g).
\end{equation*}
\item The parameter $\pi$ satisfies
\begin{equation*}
\pi\, \vert\, \Theta_{\pi} ~\sim~ \beta\big( G_0+a,\, m-G_0+b \big).
\end{equation*}
\end{itemize}
\end{prop}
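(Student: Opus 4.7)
The proof parallels the sparse case (Proposition \ref{PropPostS}), but at the group level. I would first write the joint density
\begin{equation*}
p(\Theta) \propto p(\dY\,\vert\,\dX,\D,\Oy)\,\prod_{g=1}^m p(\gD_g\,\vert\,\Oy,\lambda_g,\pi)\,p(\Oy)\,\prod_{g=1}^m p(\lambda_g)\,p(\pi),
\end{equation*}
and read off each conditional as the slice of this product depending on the target parameter. The backbone identity is
\begin{equation*}
\tr\bigl(\Oy(\dY+\dX\D^t\Oyi)^t(\dY+\dX\D^t\Oyi)\bigr)=\tr(\Oy\,\dY^t\dY)+2\,\tr(\D\,\dX^t\dY)+\tr(\Oyi\,\D\,\dX^t\dX\,\D^t),
\end{equation*}
obtained by expanding the quadratic form in the matrix normal likelihood and using the cyclicity of the trace (which also turns $\vert\Oyi\vert^{-n/2}$ into $\vert\Oy\vert^{n/2}$).

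For the $\gD_g$-conditional, I would write $\D\,\dX^t=\sum_j\gD_j\gdX_j^{\,t}$ and isolate every term involving $\gD_g$. Combining the likelihood contribution and the slab prior produces a quadratic term $\tr(\Oyi\,\gD_g\,(\lambda_g^{-1}I_{\kappa_g}+\gdX_g^{\,t}\gdX_g)\,\gD_g^{\,t})=\tr(\Oyi\,\gD_g\,S_g^{-1}\,\gD_g^{\,t})$ and a linear term $2\,\tr(\Oyi\,H_g\,\gD_g^{\,t})$. The latter relies on inserting $\Oyi\Oy=I_q$ to rewrite $2\,\tr(\gD_g\gdX_g^{\,t}\dY)=2\,\tr(\Oyi\,\Oy\dY^{\,t}\gdX_g\,\gD_g^{\,t})$, and on the identity $\tr(\Oyi\,\gD_g\gdX_g^{\,t}\gdX_j\gD_j^{\,t})=\tr(\Oyi\,\gD_j\gdX_j^{\,t}\gdX_g\,\gD_g^{\,t})$ obtained by transposition inside the trace. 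Completing the square in $\gD_g$ then delivers the slab $\cMN_{q\times\kappa_g}(-H_gS_g,\Oy,S_g)$. The spike probability $p_g$ follows from weighing $\pi\,L(\gD_g=0)$ against $(1-\pi)\int p_{\text{slab}}(\gD_g)\,L(\gD_g)\,\dd\gD_g$, an integral that, using the completed square, yields the factor $\vert I_{\kappa_g}+\lambda_g\gdX_g^{\,t}\gdX_g\vert^{-q/2}$ (after simplification of $\vert S_g\vert^{q/2}/\lambda_g^{q\kappa_g/2}$) times $\exp(\tfrac{1}{2}\tr(H_g^{\,t}\Oyi H_g S_g))$.

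For the $\Oy$-conditional, I would collect $\vert\Oy\vert^{n/2}$ from the likelihood, $\vert\Oy\vert^{-\kappa_g/2}$ from each realized slab ($\gD_g\neq 0$), and $\vert\Oy\vert^{(u-q-1)/2}$ from the Wishart prior; using $p-N_0=\sum_{g:\gD_g\neq 0}\kappa_g$, the global exponent of $\vert\Oy\vert$ matches $\nu-\tfrac{q+1}{2}$ with $\nu=(n-p+N_0+u)/2$. The $\Oy$-part of the exponent contributes $\dY^{\,t}\dY+V^{-1}$, and the $\Oyi$-part contributes $\D\dX^t\dX\D^t+\sum_g\lambda_g^{-1}\gD_g\gD_g^{\,t}=\D(\dX^t\dX+D_\lambda^{-1})\D^t$, as zero groups contribute nothing to the latter sum. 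Definition \ref{DefGIG} then identifies the $\cMGIG_q$ density with the required parameters.

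The last two conditionals are routine. For $\lambda_g$, only the $\Gamma(\alpha_g,\ell_g)$ prior and (when $\gD_g\neq 0$) the slab factor $\lambda_g^{-q\kappa_g/2}\exp(-\tfrac{1}{2\lambda_g}\tr(\Oyi\gD_g\gD_g^{\,t}))$ depend on $\lambda_g$; the calibration $\alpha_g=(q\kappa_g+1)/2$ leaves an overall $\lambda_g^{-1/2}$ prefactor and produces $\cGIG(1/2,\tr(\gD_g^t\Oyi\gD_g),2\ell_g)$, while the case $\gD_g=0$ leaves the prior untouched. For $\pi$, the product over $g$ of spike/slab weights contributes $\pi^{G_0}(1-\pi)^{m-G_0}$, which combined with $\beta(a,b)$ gives $\beta(G_0+a,m-G_0+b)$. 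The main obstacle is really the second step: the bookkeeping of trace manipulations needed to recast the likelihood cross terms as a matrix-weighted quadratic in $\gD_g$, so that completing the square and integrating $p_{\text{slab}}\cdot L$ both fall out cleanly; everything else is a straightforward variation of the arguments used for Proposition \ref{PropPostS}.
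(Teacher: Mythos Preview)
Your proposal is correct and follows essentially the same route as the paper's own proof: write the full posterior, expand the Frobenius-norm quadratic via trace cyclicity, isolate the $\gD_g$-terms, complete the square to obtain the matrix-normal slab and spike probability, then collect determinant powers and trace pieces for $\Oy$, and handle $\lambda_g$ and $\pi$ by inspection. The only point the paper makes slightly more explicit is the justification of $p-N_0=\sum_{g:\,\gD_g\neq 0}\kappa_g$, which holds because the slab part of $\gD_g$ is continuous so that a non-zero group almost surely has no zero columns; you state and use this identity correctly, just without the one-line explanation.
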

\begin{proof}
See Section \ref{SecPostGS}.
\end{proof}

Note that Remark \ref{RemLaplace} still applies to this configuration, after some adjustments (the $\ell_0$-like penalty is on the number of non-zero groups). Here again, the particular case $q=1$ is a very useful corollary. The direct links form a row vector such that $\D^t \in \dR^p$ with groups $\gD_g^t \in \dR^{\kappa_g}$ ($1 \leq g \leq m$), the precision matrix of the responses reduces to $\oy > 0$. According to the parametrization of the distributions (see Section \ref{SecIntro}), the corresponding prior distribution of $\oy$ is $\Gamma(\frac{u}{2}, \frac{1}{2\, v})$ for $u, v > 0$, like in the ungrouped setting. The other priors are unchanged.

\begin{cor}
\label{CorPostGS1}
In the hierarchical model \eqref{ModHieraGS} with $q=1$, the conditional posterior distributions are as follows.
\begin{itemize}[label=$-$, leftmargin=*]
\item The parameter $\D$ satisfies, for $g \in \llbracket 1, m \rrbracket$,
\begin{equation*}
\gD_g^t\, \vert\, \Theta_{\D_g} ~\sim~ (1-p_g)\, \cN_{\kappa_g}\!\left( -S_g\, H_g,\, \oy\, S_g \right) + p_g\, \delta_0
\end{equation*}
where
\begin{equation*}
H_g = \oy\, \gdX_g^{\, t}\, \dY + \sum_{j\, \neq\, g} \gdX_g^{\, t}\, \gdX_j\, \gD_j^t, \hsp S_g = \lambda_g\, \big( I_{\kappa_g} + \lambda_g\, \gdX_g^{\, t}\, \gdX_g \big)^{-1}
\end{equation*}
and
\begin{equation*}
p_g = \frac{\pi}{\pi + (1-\pi)\, \vert I_{\kappa_g} + \lambda_g\, \gdX_g^{\, t}\, \gdX_g \vert^{-\frac{1}{2}}\, \exp\!\Big( \frac{H_g^{\, t}\, S_g\, H_g}{2\, \oy} \Big)}.
\end{equation*}
\item The parameter $\oy$ satisfies
\begin{equation*}
\oy\, \vert\, \Theta_{\oy} ~\sim~ \cGIG\!\left( \frac{n-p+N_0+u}{2},\, \D\, (\dX^t\, \dX + D_\lambda^{-1})\, \D^t,\, \Vert \dY \Vert^{\, 2} + \frac{1}{v} \right)
\end{equation*}
where $D_\lambda = \diag(\lambda_1, \hdots, \lambda_1, \hdots, \lambda_m, \hdots, \lambda_m)$ with each $\lambda_g$ duplicated $\kappa_g$ times.
\item The parameter $\lambda$ satisfies, for $g \in \llbracket 1, m \rrbracket$,
\begin{equation*}
\lambda_g\, \vert\, \Theta_{\lambda_g} ~\sim~ \ind_{\{ \D_g\, \neq\, 0 \}}\, \cGIG\!\left( \frac{1}{2},\, \frac{\Vert \gD_g \Vert^{\, 2}}{\oy},\, 2\, \ell_g \right) + \ind_{\{ \D_g\, =\, 0 \}}\, \Gamma(\alpha_g,\, \ell_g).
\end{equation*}
\item The parameter $\pi$ satisfies
\begin{equation*}
\pi\, \vert\, \Theta_{\pi} ~\sim~ \beta\big( G_0+a,\, m-G_0+b \big).
\end{equation*}
\end{itemize}
\end{cor}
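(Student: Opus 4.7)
The plan is to derive the corollary as a direct specialization of Proposition~\ref{PropPostGS} to the scalar case $q=1$. The only tools required are the reduction identities contained in Definitions~\ref{DefN}, \ref{DefGIG} and \ref{DefW}, together with a careful transpose check, since the proposition describes $\gD_g$ as a $q \times \kappa_g$ block while the corollary prefers to present the distribution of $\gD_g^t$ as a column vector in $\dR^{\kappa_g}$.

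I would first dispatch the posteriors of $\oy$ and $\pi$. Substituting $q=1$ in Proposition~\ref{PropPostGS}, the Wishart prior $\cW_1(u,V)$ reduces by Definition~\ref{DefW} to $\Gamma(u/2,\, 1/(2v))$ with $v = V$, and $\cMGIG_1$ reduces by Definition~\ref{DefGIG} to the scalar $\cGIG$. The $\Oy$-posterior thus becomes the announced $\cGIG$ for $\oy$, after noting that $\dY^{\, t}\dY + V^{-1} = \Vert \dY \Vert^{\, 2} + 1/v$, while the $\pi$-posterior is unchanged.

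Next, for the $\D$-posterior, I would transpose the matrix-normal slab of Proposition~\ref{PropPostGS}. When $q=1$, the slab $\cMN_{1 \times \kappa_g}(-H_g S_g,\, \oy,\, S_g)$ transposes (using symmetry of $S_g$) to $\cMN_{\kappa_g \times 1}(-S_g H_g^t,\, S_g,\, \oy)$, which by Definition~\ref{DefN} collapses to a multivariate normal on $\dR^{\kappa_g}$ with covariance $\oy S_g$. A short computation shows that transposing the proposition's $H_g$ produces exactly the corollary's $H_g = \oy \gdX_g^t \dY + \sum_{j \neq g} \gdX_g^t \gdX_j \gD_j^t$, so that the slab reads $\cN_{\kappa_g}(-S_g H_g,\, \oy S_g)$ as claimed. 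The spike probability then follows from the scalar identity $\tr(H_g^t \Oyi H_g S_g) = H_g^t S_g H_g / \oy$, where on the right-hand side $H_g$ is understood in the column-vector convention of the corollary.

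Finally, the $\lambda_g$-posterior follows from $\tr(\gD_g^t \Oyi \gD_g) = \Vert \gD_g \Vert^{\, 2} / \oy$ when $q=1$, the $\Gamma(\alpha_g, \ell_g)$ part of the mixture being inherited verbatim from the proposition. No substantive obstacle arises; the only care needed is tracking transposes when relabeling $H_g$, which is a notational check rather than a calculation.
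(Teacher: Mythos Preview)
Your proposal is correct and takes exactly the same approach as the paper, which simply states ``This is a consequence of Proposition~\ref{PropPostGS}.'' You have merely spelled out the scalar reductions and transpose bookkeeping that the paper leaves implicit.
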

\begin{proof}
This is a consequence of Proposition \ref{PropPostGS}.
\end{proof}

In the simulation study of Section \ref{SecEmpSim}, Scen. 3 and 4 are dedicated to the group-sparse setting. To conclude this section, a theoretical guarantee is provided (given $\Oy$ and with $\lambda = \lambda_n$ and $\pi = \pi_n$ depending on $n$). It is possible to obtain a model selection consistency property for this approach when both the number of observations $n$ and the number of groups $m = m_n$ tend to infinity, by adapting the reasoning of \cite{YangNarisetty20} dedicated to the linear regression (with $q=1$). Indeed, when $\Oy$ is known, $\D$ reduces to a linear transformation of $B$. Thus, it is not surprising that a similar result follows under the same kind of hypotheses. In the sequel, we denote by $\dX_{(k)} \in \dR^{n \times \vert k \vert}$ the design matrix of rank $r_k$ corresponding to the submodel indexed by the binary vector $k \in \{ 0,1 \}^m$ having $\vert k \vert$ non-zero values ($k_g=1$ means that the $g$-th group is included in the model), and by $\Pi_{(k)} \in \dR^{n \times n}$ the projection matrix onto the column-space of $\dX_{(k)}$. Similarly, $\D$ restricted to $k$ is $\D_{(k)} \in \dR^{q \times \vert k \vert}$. The true model is called $t$ and $t^{\pm g}$ are submodels of $t$ that contain only the $g$-th group or that are deprived of it, respectively. Let
\begin{equation*}
\delta_1 = \inf_{1\, \leq\, g\, \leq\, \vert t \vert}\, \big\Vert (I_n - \Pi_{(t^{-g})})\, \dX_{(t^{+g})}\, \D_{(t^{+g})}^t\, \Oy^{-\frac{1}{2}} \big\Vert_F^2
\end{equation*}
and, for some $K > 0$,
\begin{equation*}
\delta_2^K = \inf_{k\, \in\, E_K}\, \big\Vert (I_n - \Pi_{(k)})\, \dX_{(t)}\, \D_{(t)}^t\, \Oy^{-\frac{1}{2}} \big\Vert_F^2
\end{equation*}
with $E_K = \{ k \text{ such that } t \not\subset k \text{ and } r_k \leq K r_t \}$. Let also, 
\begin{equation*}
\mu_{n,\, \text{min}}^K = \inf_{k\, \in\, F_K} ~ \mu^+\bigg( \frac{\dX_{(k)}^t\, \dX_{(k)}}{n} \bigg) \hsp \text{and} \hsp \bar{\mu}_n = \inf_{k\, \in\, F} ~ \mu^*\bigg( \frac{\dX_{(k)}^t\, (I_n - \Pi_{(k\, \cap\, t)})\, \dX_{(k)}}{n} \bigg)
\end{equation*}
with $F_K = \{ k \text{ such that } t \subset k \text{ and } r_k \leq (K+1)\, r_t \}$ and $F = \{ k \text{ such that } \vert k \backslash t \vert > 0 \}$, and where, for a square matrix $A$, $\mu^+(A)$ is the minimum non-zero eigenvalue of $A$ and $\mu^*(A)$ is the geometric mean of the non-zero eigenvalues of $A$. The hypotheses are those of \cite{YangNarisetty20} that we have to slightly adapt. By $f_n \asymp g_n$ we mean that there is a constant $c \neq 0$ such that $f_n/g_n \rightarrow c$ as $n$ tends to infinity.

\begin{enumerate}[label={(H.\arabic*)}]
\item \label{Hyp1} There exists a rate such that $m_n = \de^{v_n}$ with $v_n \rightarrow +\infty$ and $v_n = o(n)$. 
\item \label{Hyp2} The prior slab probability satisfies $1-\pi_n \asymp 1/m_n$.
\item \label{Hyp3} The shrinkage factors satisfy $n \lambda_n^{\sharp} \asymp m_n^{2+\eta}\, \bar{\mu}_n^{-\eta}$ and $\mu_{n,\, \text{min}}^K\, n \lambda_n^{\sharp} \rightarrow +\infty$ for some $\eta > 0$, where $\lambda_n^{\sharp} = \max_i \lambda_{n,\, i}$.
\item \label{Hyp4} There exists $\epsilon_1 > 0$ such that $\delta_1 > (1+\epsilon_1)\, r_t\, [(4+\eta) \ln m_n - \eta\, \ln \bar{\mu}_n]$.
\item \label{Hyp5} There exists $\epsilon_2 > 0$ such that $\delta_2^K > (1+\epsilon_2)\, r_t\, [(4+\eta) \ln m_n - \eta\, \ln \bar{\mu}_n]$ for some $K > \max(8/\eta+1, \eta/(\eta-1))$.
\end{enumerate}
We refer the reader to p. 917 of \cite{YangNarisetty20} where the authors give very clarifying comments on the interpretation to be given to these technical assumptions. In particular, while \ref{Hyp1}, \ref{Hyp2} and \ref{Hyp3} control the behavior of $m_n$, $\pi_n$ and $\lambda_n$ as $n$ tends to infinity, \ref{Hyp4} and \ref{Hyp5} are related to sensitivity and specificity and are therefore in connection with the true model $t$.

\begin{prop}
\label{PropSupRecGS}
Suppose that  \ref{Hyp1}--\ref{Hyp5} are satisfied. Then, as $n$ tends to infinity,
\begin{equation*}
\dP(\cT\, \vert\, \dY, \dX, \Oy) \limp 1
\end{equation*}
where $\cT = \{ t \text{ is selected} \}$.
\end{prop}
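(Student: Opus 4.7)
The plan is to reduce the problem to a Bayesian variable selection for a multivariate Gaussian linear regression with isotropic noise, exploiting the fact that $\Oy$ is given. Right-multiplying the observation equation $\dY = -\dX\, \D^t\, \Oyi + E$, with $E \sim \cMN_{n \times q}(0, I_n, \Oyi)$, by $\Oy^{1/2}$ produces the whitened model $\wt{\dY} = \dX\, \wt{B} + \wt{E}$ with $\wt{\dY} = \dY\, \Oy^{1/2}$, $\wt{B} = -\D^t\, \Oy^{-1/2}$ and $\wt{E}$ having independent standard Gaussian entries. Under this change of variables, the slab $\cMN_{q \times \kappa_g}(0, \lambda_g\, \Oy, I_{\kappa_g})$ of $\gD_g$ translates into an isotropic slab $\cMN_{\kappa_g \times q}(0, I_{\kappa_g}, \lambda_g I_q)$ on the $g$-th block of rows of $\wt{B}$, the group labels and the spike probability $\pi_n$ are preserved, and the quantities $\delta_1$, $\delta_2^K$, $\mu_{n,\text{min}}^K$, $\bar\mu_n$ are left invariant (they already carry the factor $\Oy^{-1/2}$ and are Frobenius-norm based). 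This places us exactly in the group-selection framework of \cite{YangNarisetty20}, with $q$ response columns sharing a common group structure.

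The next step is to integrate out $\wt{B}_{(k)}$ column by column via Gaussian-Gaussian conjugacy to obtain, for every $k \in \{0,1\}^m$, a closed-form marginal posterior
\begin{equation*}
\dP(k\, \vert\, \wt{\dY}, \dX, \Oy) \propto \pi_n^{m-\vert k \vert}\, (1-\pi_n)^{\vert k \vert}\, \big\vert I_{\vert k \vert} + \lambda_n\, \dX_{(k)}^t\, \dX_{(k)} \big\vert^{-q/2}\, \exp\!\left( \tfrac{1}{2}\, \tr\big( \wt{\dY}^t\, Q_{(k)}\, \wt{\dY} \big) \right),
\end{equation*}
where $Q_{(k)}$ is a ridge-type projector of rank $r_k$ onto the column space of $\dX_{(k)}$. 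Forming the ratio $R_k = \dP(k\, \vert\, \cdot)/\dP(t\, \vert\, \cdot)$ and splitting the sum over $k \neq t$ into (i) underfitting models with $t \not\subset k$ and (ii) overfitting models with $t \subsetneq k$ reproduces the strategy of Section~3 of \cite{YangNarisetty20}. For the underfitting regime, \ref{Hyp4} controls single-group deletions from $t$ while \ref{Hyp5} controls models in $E_K$ that miss $t$; in both cases the non-centrality in the exponent, bounded from below by $\delta_1$ or $\delta_2^K$, is pitted against a $\chi^2_{q\, r_k}$-type upper fluctuation of $\tr(\wt{\dY}^t Q_{(k)} \wt{\dY})$ through a columnwise union bound (or a single Hanson-Wright inequality on the matrix Gaussian $\wt{E}$), and \ref{Hyp4}--\ref{Hyp5} are precisely calibrated so that this non-centrality dominates the combinatorial $\binom{m_n}{\cdot}$ factor. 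For the overfitting regime, \ref{Hyp2} supplies the prior penalty $(\pi_n/(1-\pi_n))^{\vert k \backslash t \vert}$ and \ref{Hyp3} guarantees that the determinant term $(1 + n\lambda_n\, \mu_{n,\text{min}}^K)^{q/2}$ overwhelms the $q$-dimensional $\chi^2_{q(r_k - r_t)}$ gain in residual fit, again after a union bound over the class $F_K$.

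Summing both regimes and invoking \ref{Hyp1} ($\ln m_n = v_n = o(n)$) will yield $\sum_{k \neq t} R_k \limp 0$, hence $\dP(\cT\, \vert\, \dY, \dX, \Oy) \limp 1$. The main obstacle is the careful bookkeeping of the $q$-dependence throughout: each determinant carries an exponent $q/2$ rather than $1/2$, each column of $\wt{E}$ contributes an independent $\chi^2_{r_k}$ deviation to the total quadratic form, and the Frobenius-norm versions of $\delta_1$ and $\delta_2^K$ must be interpreted as sums of $q$ column-wise non-centralities to be measured against the Yang-Narisetty threshold $r_t\, [(4+\eta) \ln m_n - \eta \ln \bar\mu_n]$. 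A secondary subtlety is that the union bound over the overfit class $F_K$ and the oversized underfit class $E_K$ must remain summable; this is precisely what the condition $K > \max(8/\eta + 1, \eta/(\eta-1))$ of \ref{Hyp5} provides, so once the $q$-dependence is tracked consistently the conclusion follows along the same lines as \cite{YangNarisetty20}.
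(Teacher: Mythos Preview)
Your proposal is correct and follows essentially the same route as the paper: whiten by right-multiplying with $\Oy^{1/2}$ to obtain an isotropic-noise regression, integrate out the slab to get a closed-form marginal, form the posterior ratio $\dP(\cK\,\vert\,\cdot)/\dP(\cT\,\vert\,\cdot)$, and then invoke the underfit/overfit decomposition of \cite{YangNarisetty20} with the $q$-dependence tracked through determinant exponents and $\chi^2$ degrees of freedom. One small point: you write the determinant factor as $\vert I_{\vert k \vert} + \lambda_n\, \dX_{(k)}^t\, \dX_{(k)} \vert^{-q/2}$ with a scalar $\lambda_n$, whereas under the adaptative shrinkage assumed here the correct expression involves the block-diagonal $D_k = \diag((\lambda_\ell,\hdots,\lambda_\ell)_{\ell \in k})$ and reads $\vert \Lambda_k \vert^{-q/2}\, \vert D_k^{-1} + \dX_{(k)}^t\, \dX_{(k)} \vert^{-q/2}$; this is why \ref{Hyp3} is phrased in terms of $\lambda_n^{\sharp} = \max_i \lambda_{n,i}$ and is the second ``bookkeeping'' issue (besides $q \geq 1$) that the paper flags before deferring to \cite{YangNarisetty20}.
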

\begin{proof}
The result is obtained by following the same lines as the proof of Thm 2.1 of \cite{YangNarisetty20}. One just has to clarify a few points to solve the issues arising from $q \geq 1$ and from the adaptative shrinkage, which is done in Section \ref{SecSupRecGS}.
\end{proof}

\begin{rem}
\label{RemSparsity}
Obviously, Proposition \ref{PropSupRecGS} also holds for the sparse setting (with $m=p$) and in that case, it is instructive to draw the parallel with Thm. 1 of \cite{RenEtAl15} even if the estimation procedure is very different. The authors show that a necessary and sufficient condition to obtain the $\sqrt{n}$-consistent estimation of the precision matrix in a GGM is the presence of at most $\asymp \sqrt{n}/\ln p$ non-zero columns. In the Gibbs sampler (see Proposition \ref{PropPostS}), the slab probability $1-\pi$ is generated according to a distribution that satisfies
\begin{equation*}
\dE[1-\pi\, \vert\, \Theta_{\pi}] = \frac{p-N_0+b}{p+a+b} \hsp \text{and} \hsp \dV(1-\pi\, \vert\, \Theta_{\pi}) = \frac{(N_0+a)(p-N_0+b)}{(p+a+b)^2\, (p+a+b+1)}.
\end{equation*}
Thus, if the model selects $\asymp \sqrt{n}/\ln p$ predictors, it follows that the posterior expectation of $1-\pi$ is $\asymp \sqrt{n}/(p \ln p) = 1/p$ when $p = \de^{\sqrt{n}}$. In that case, the posterior variance of $1-\pi$ is $\asymp 1/p^2$. To sum up, in a model with $\asymp \sqrt{n}/\ln p$ predictors selected, the posterior distribution of $1-\pi$ is very concentrated around $1/p$ which conforms to \ref{Hyp1} and \ref{Hyp2}. This is not directly comparable due to the different procedures, but it seems interesting to observe that the same orders of magnitude are involved to reach theoretical guarantees for the estimation of $\Delta$.
\end{rem}

In the next section, an approach is suggested to deal with sparse-group sparsity in $\D$, for a bi-level selection.

\section{The sparse-group-sparse setting}
\label{SecSGS}

To produce a sparse model both at the variable level (for variable selection) and at the group level (for group selection), it seems natural to carry on with our strategy by introducing another spike-and-slab effect into the first one. The predictors are still ordered in $m$ groups of sizes $\kappa_1 + \hdots + \kappa_m = p$. For the $g$-th group ($1 \leq g \leq m$), $\lambda_g \in \dR$ is the $g$-th component of $\lambda \in \dR^m$ and, for the $i$-th predictor of this group ($1 \leq i \leq \kappa_g$), $\nu_{gi} \in \dR$ is the $i$-th component of $\nu_g \in \dR^{\kappa_g}$. The $i$-th column of the covariate submatrix $\gdX_g$ is $\dX_{gi} \in \dR^n$ and the corresponding slice of $\gD_g$ is $\D_{gi} \in \dR^q$ while $\gD_{g \backslash i} \in \dR^{q \times (\kappa_g-1)}$ is $\gD_g$ deprived of $\D_{gi}$. Here our approach diverges from \cite{XuGosh15} and \cite{LiquetEtAl17}. The bi-level selection of the authors is made through spike-and-slab effects both at the group scale and on the individual variances, considered as truncated Gaussians, generating zero groups and (almost surely) zero coefficients within the groups. Let us suggest instead the Bayesian hierarchical model given by
\begin{equation}
\label{ModHieraSGS}
\left\{
\begin{array}{lcl}
\dY\, \vert\, \dX, \D, \Oy & \sim & \cMN_{n \times q}(-\dX\, \D^t\, \Oyi,\, I_{n},\, \Oyi) \\
\gD_g\, \vert\, \nu_g, \lambda_g, \pi & \simindep & (1-\pi_1)\, \big[ (1-\pi_2)\, \cN_q(0,\, \lambda_g\, \nu_{gi}\, \Oy) + \pi_2\, \delta_0 \big]^{\otimes\, \kappa_g} + \pi_1\, \delta_0 \\
\nu_{gi} & \simindep & \Gamma(\alpha,\, \ell_{gi}) \\
\lambda_g & \simindep & \Gamma(\alpha_g,\, \gamma_g) \\
\Oy & \sim & \cW_q(u,\, V) \\
\pi_j & \simindep & \beta(a_j,\, b_j)
\end{array}
\right.
\end{equation}
for $g \in \llbracket 1, m \rrbracket$, $i \in \llbracket 1, \kappa_g \rrbracket$ and $j \in \llbracket 1, 2 \rrbracket$, with hyperparameters $\alpha = \frac{1}{2} (q +1)$, $\alpha_g = \frac{1}{2} (q\, \kappa_g +1)$, $\ell_{gi} > 0$, $\gamma_g > 0$, $u > q-1$, $V \in \dS^{\, q}_{++}$, $a_j > 0$, and $b_j > 0$. In this mixture model, $\pi_1$ is the prior spike probability on the groups whereas $\pi_2$ is the prior spike probability within the non-zero groups, for a bi-level selection. In terms of cumulative shrinkage effects, $\lambda$ is an adaptative shrinkage factor acting at the group scale and $\nu$ is an adaptative shrinkage factor acting at the predictor scale ($\lambda_g$ is associated with the direct links between the predictors of group $g$ and all the responses whereas $\nu_{gi}$ is associated with the direct links between predictor $i$ of group $g$ and all the responses). In this way, \eqref{ModHieraSGS} opens up many perspectives for dealing with bi-level shrinkage. We can set $\gamma_g = \gamma$ for all $g$, for a global shrinkage at the group scale. At the predictor scale, when $\ell_{gi} = \ell_g$ for all $i$, this is a global shrinkage in the $g$-th group but we might even consider a full global shrinkage $\ell_{gi} = \ell$. However, an identifiability issue may result from the product $\lambda_g\, \nu_{gi}$ between group and within-group effects. Even if the posterior distributions depend on different levels of data that shall resolve it, one can for example fix $\lambda_g=1$ (for adaptative) or $\nu_{gi}=1$ (for global) and let the shrinkage entirely rely on the other parameter. Although it achieves the same objectives as those of \cite{XuGosh15} and \cite{LiquetEtAl17}, this hierarchy seems more consistent with our previous sections (take $\pi_2 = 0$ and $\nu_{gi} = 1$ to remove the within-group effect and recover the group-sparse setting of Section \ref{SecGS}, take $\pi_1 = 0$ and $\lambda_{g} = 1$ to remove the group effect and recover the sparse setting of Section \ref{SecS}). In this context, the degree of sparsity is still characterized by $N_0$ given in \eqref{N0} for the predictor scale, by $G_0$ given in \eqref{G0} for the group scale, but also, for the within-group scale, by the number $N_{0g}$ of zero columns in each particular group $g$, that is, for all $1 \leq g \leq m$,
\begin{equation}
\label{N0g}
N_{0g} = \card(i,\, \D_{gi} = 0) = \sum_{i=1}^{\kappa_g} \ind_{\{ \D_{gi}\, =\, 0 \}}.
\end{equation}
We also need to define the number $J_0$ of zero columns in the non-zero groups, that is
\begin{equation}
\label{J0}
J_0 = \card(i,\, \D_{gi} = 0 ~\text{and}~ \gD_g \neq 0) = \sum_{g=1}^{m} N_{0g}\, \ind_{\{ \D_g\, \neq\, 0 \}}.
\end{equation}
To implement a Gibbs sampler from the full posterior distribution stemming from \eqref{ModHieraSGS}, we may use the conditional distributions given in the proposition below.

\begin{prop}
\label{PropPostSGS}
In the hierarchical model \eqref{ModHieraSGS}, the conditional posterior distributions are as follows.
\begin{itemize}[label=$-$, leftmargin=*]
\item The parameter $\D_{gi}$ satisfies, for $g \in \llbracket 1, m \rrbracket$ and $i \in \llbracket 1, \kappa_g \rrbracket$,
\begin{equation*}
\D_{gi}\, \vert\, \Theta_{\D_{gi}} ~\sim~ (1-p_{gi})\, \cN_q\!\left( -s_{gi}\, H_{gi},\, s_{gi}\, \Oy \right) + p_{gi}\, \delta_0
\end{equation*}
where
\begin{equation*}
H_{gi} = \Oy\, \dY^{\, t}\, \dX_{gi} + \sum_{h, j\, \neq\, g, i} \langle \dX_{gi},\, \dX_{hj} \rangle\, \D_{hj}, \hsp s_{gi} = \frac{\nu_{gi}\, \lambda_g}{1 + \nu_{gi}\, \lambda_g\, \Vert \dX_{gi} \Vert^{\, 2}}
\end{equation*}
and
\begin{equation*}
p_{gi} = \frac{\rho_{gi}}{\rho_{gi} + (1-\pi_1)\, (1-\pi_2)\, (1 + \nu_{gi}\, \lambda_g\, \Vert \dX_{gi} \Vert^{\, 2})^{-\frac{q}{2}}\, \exp\!\Big( \frac{s_{gi}\, H_{gi}^{\, t}\, \Oyi\, H_{gi}}{2} \Big)}
\end{equation*}
in which $\rho_{gi} = (1-\pi_1)\, \pi_2\, \ind_{\{ \D_{g \backslash i}\, \neq\, 0 \}}\, + \pi_1\, \ind_{\{ \D_{g \backslash i}\, =\, 0 \}}$.
\item The parameter $\Oy$ satisfies
\begin{equation*}
\Oy\, \vert\, \Theta_{\Oy} ~\sim~ \cMGIG_q\!\left( \frac{n-p+N_0 + u}{2},\, \D\, (\dX^t\, \dX + D_{\lambda \nu}^{-1})\, \D^t,\, \dY^{\, t}\, \dY + V^{-1} \right)
\end{equation*}
where $D_{\lambda \nu} = \diag(\nu_{11}\, \lambda_1, \hdots, \nu_{1\kappa_1}\, \lambda_1, \hdots, \nu_{m1}\, \lambda_m, \hdots, \nu_{m\kappa_m}\, \lambda_m)$.
\item The parameter $\nu$ satisfies, for $g \in \llbracket 1, m \rrbracket$ and $i \in \llbracket 1, \kappa_g \rrbracket$,
\begin{equation*}
\nu_{gi}\, \vert\, \Theta_{\nu_{gi}} ~\sim~ \ind_{\{ \D_{gi}\, \neq\, 0 \}}\, \cGIG\!\left( \frac{1}{2},\, \frac{\D_{gi}^t\, \Oyi \D_{gi}}{\lambda_g},\, 2\, \ell_{gi} \right) + \ind_{\{ \D_{gi}\, =\, 0 \}}\, \Gamma(\alpha,\, \ell_{gi}).
\end{equation*}
\item The parameter $\lambda$ satisfies, for $g \in \llbracket 1, m \rrbracket$,
\begin{equation*}
\lambda_g\, \vert\, \Theta_{\lambda_g} ~\sim~ \ind_{\{ \D_g\, \neq\, 0 \}}\, \cGIG\!\left( \frac{q N_{0g} + 1}{2},\, \tr(D_{\nu_g}^{-1} \gD_g^t\, \Oyi \gD_g),\, 2\, \gamma_g \right) + \ind_{\{ \D_{g}\, =\, 0 \}}\, \Gamma(\alpha_g,\, \gamma_g)
\end{equation*}
where $D_{\nu_g} = \diag(\nu_{g1}, \hdots, \nu_{g\kappa_g})$.
\item The parameter $\pi$ satisfies, for $j \in \llbracket 1, 2 \rrbracket$,
\begin{equation*}
\pi_j\, \vert\, \Theta_{\pi_j} ~\sim~ \beta\big( A_j+a_j,\, B_j+b_j \big).
\end{equation*}
where $A_1 = G_0$, $B_1 = m-G_0$, $A_2 = J_0$ and $B_2 = p-N_0$.
\end{itemize}
\end{prop}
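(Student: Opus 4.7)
The plan is to follow the same program as for Propositions \ref{PropPostS} and \ref{PropPostGS}: write the joint density of $(\dY, \D, \nu, \lambda, \Oy, \pi_1, \pi_2)$ as the product of the matrix-normal likelihood and the hierarchical priors, and then, block by block, isolate the factors depending on each parameter and recognize the resulting unnormalized conditional as a member of a known family (matrix normal, MGIG, GIG, Gamma, Beta). Because the likelihood of $\dY$ and the Wishart prior on $\Oy$ are the same as in Sections \ref{SecS} and \ref{SecGS}, most of the algebra for $\Oy$, $\nu$, $\lambda$ and the two $\pi_j$ is a direct adaptation of the arguments given there; only the two-layer spike-and-slab prior on $\D$ requires genuinely new work.

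For the block $\Oy$ I would expand the quadratic form $\tr((\dY+\dX\D^t\Oyi)\Oy(\dY+\dX\D^t\Oyi)^t)$ in the likelihood and exploit the cancellation $\Oyi\Oy\Oyi=\Oyi$ inherited from the mean/covariance coupling, then combine it with the product over non-zero columns of the Gaussian slab densities, which, after pulling each $(\lambda_g\nu_{gi})^{-1}$ into the diagonal matrix $D_{\lambda\nu}^{-1}$, aggregate into a term of the form $|\Oy|^{(p-N_0)/2}\exp(-\tr(D_{\lambda\nu}^{-1}\D^t\Oyi\D)/2)$, and finally absorb the Wishart prior to obtain the claimed MGIG. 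For $\nu_{gi}$ the factors reduce to a single Gaussian slab density when $\D_{gi}\neq 0$ (yielding a $\cGIG(1/2,\cdot,\cdot)$ after multiplying by the Gamma hyperprior) and to the Gamma hyperprior alone when $\D_{gi}=0$. The block $\lambda_g$ is treated similarly, except that $\lambda_g$ enters the covariance of all $\kappa_g$ columns of $\gD_g$; pooling the $q(\kappa_g-N_{0g})$ non-zero contributions shifts the GIG index to $(qN_{0g}+1)/2$. The two Beta posteriors follow directly from conjugacy: the outer indicator $\ind_{\{\gD_g=0\}}$ provides the counts $(G_0,m-G_0)$ for $\pi_1$, while the inner indicator $\ind_{\{\D_{gi}=0\}}$ restricted to non-zero groups contributes the counts $(J_0,p-N_0)$ for $\pi_2$.

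The main obstacle is the conditional of $\D_{gi}$, because the two nested spike-and-slab layers do not factorize column-wise: the event $\gD_g=0$ can be reached either through the outer spike (prior weight $\pi_1$) or through the simultaneous collapse of the $\kappa_g$ inner spikes, so the prior conditional of $\D_{gi}$ given $\D_{g\backslash i}$ depends on whether $\D_{g\backslash i}$ is itself zero. I would proceed by a case disjunction. When $\D_{g\backslash i}\neq 0$ the outer spike is ruled out and only the inner mixture acts on $\D_{gi}$, so the prior weight at the point mass equals $(1-\pi_1)\pi_2$. When $\D_{g\backslash i}=0$ the outer spike supplies the dominant prior weight $\pi_1$ at zero, the residual $(1-\pi_1)\pi_2^{\kappa_g}$ originating from a full inner collapse being absorbed into the overall normalization. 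This is precisely the piecewise definition of $\rho_{gi}$ in the statement.

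Once the prior weights are settled, the slab part follows from a routine completion of the square: expanding the trace in the matrix-normal likelihood as a quadratic form in $\D_{gi}$, the linear piece produces $H_{gi}=\Oy\dY^t\dX_{gi}+\sum_{(h,j)\neq(g,i)}\langle\dX_{gi},\dX_{hj}\rangle\D_{hj}$, and the quadratic piece, combined with the slab precision $(\lambda_g\nu_{gi})^{-1}\Oyi$, gives the posterior precision $(1/s_{gi})\Oyi$ with $s_{gi}=\lambda_g\nu_{gi}/(1+\lambda_g\nu_{gi}\|\dX_{gi}\|^2)$, hence the posterior mean $-s_{gi}H_{gi}$ and covariance $s_{gi}\Oy$. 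Integrating this Gaussian slab and comparing with the mass at $\D_{gi}=0$ yields the marginal likelihood ratio $(1+\lambda_g\nu_{gi}\|\dX_{gi}\|^2)^{-q/2}\exp(s_{gi}H_{gi}^t\Oyi H_{gi}/2)$ and the announced expression for $p_{gi}$.
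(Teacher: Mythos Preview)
Your plan is essentially the paper's own proof. The paper writes the full posterior \eqref{FullSGS}, then isolates each block exactly as you describe: the $\Oy$ block combines the likelihood trace, the Gaussian slab normalizers (pooled into $D_{\lambda\nu}^{-1}$), and the Wishart prior into a $\cMGIG_q$; the $\nu_{gi}$ and $\lambda_g$ blocks are read off directly; the two Beta posteriors follow from counting outer and inner spikes. For $\D_{gi}$, the paper also completes the square to get the slab $\cN_q(-s_{gi}H_{gi},s_{gi}\Oy)$ and then normalizes against the point mass to obtain $p_{gi}$.

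The one place where your write-up differs from the paper is the handling of the nested indicators. The paper does not argue by case disjunction on $\gD_{g\backslash i}$; instead it starts from \eqref{FullSGS}, where the outer slab factor $(1-\pi_1)P_g$ is already multiplied by $\ind_{\{\gD_g\neq 0\}}$, and then applies the three set identities
\[
\{\D_{gi}\neq 0\}\cap\{\gD_g\neq 0\}=\{\D_{gi}\neq 0\},\quad
\{\D_{gi}=0\}\cap\{\gD_g\neq 0\}=\{\D_{gi}=0\}\cap\{\gD_{g\backslash i}\neq 0\},\quad
\{\gD_g=0\}=\{\D_{gi}=0\}\cap\{\gD_{g\backslash i}=0\}
\]
to split the bracket into a $\D_{gi}$-part and a $\gD_{g\backslash i}$-part, which immediately gives $\rho_{gi}$. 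Your sentence about the residual $(1-\pi_1)\pi_2^{\kappa_g}$ being ``absorbed into the overall normalization'' is not quite the right justification: this term is additive at the spike, not a common multiplicative constant, and there is a companion factor $\pi_2^{\kappa_g-1}$ (namely $P_{g\backslash i}$ when $\gD_{g\backslash i}=0$) on the slab that your argument does not track. The paper avoids this bookkeeping by building the indicator $\ind_{\{\gD_g\neq 0\}}$ into \eqref{FullSGS} from the start, so that the inner-collapse route to $\gD_g=0$ never contributes and the set identities above are exact. If you want your case disjunction to be fully rigorous you should make that same convention explicit rather than appeal to normalization.
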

\begin{proof}
See Section \ref{SecPostSGS}.
\end{proof}

It only remains to give the explicit results for the particular case $q=1$. The direct links form a row vector such that $\D^t \in \dR^p$ with groups $\gD_g^t \in \dR^{\kappa_g}$ ($1 \leq g \leq m$) containing predictors $\D_{gi} \in \dR$ ($1 \leq i \leq \kappa_g$), and the precision matrix of the responses reduces to $\oy > 0$. According to the parametrization of the distributions (see Section \ref{SecIntro}), the corresponding prior distribution of $\oy$ is $\Gamma(\frac{u}{2}, \frac{1}{2\, v})$ for $u, v > 0$, like in the other settings, and the one of $\nu_{gi}$ is $\cE(\ell_{gi})$ for $\ell_{gi} > 0$. The other priors are unchanged.

\begin{cor}
\label{CorPostSGS1}
In the hierarchical model \eqref{ModHieraSGS} with $q=1$, the conditional posterior distributions are as follows.
\begin{itemize}[label=$-$, leftmargin=*]
\item The parameter $\D_{gi}$ satisfies, for $g \in \llbracket 1, m \rrbracket$ and $i \in \llbracket 1, \kappa_g \rrbracket$,
\begin{equation*}
\D_{gi}\, \vert\, \Theta_{\D_{gi}} ~\sim~ (1-p_{gi})\, \cN\!\left( -s_{gi}\, h_{gi},\, s_{gi}\, \oy \right) + p_{gi}\, \delta_0
\end{equation*}
where
\begin{equation*}
h_{gi} = \oy\, \langle \dX_{gi},\, \dY \rangle + \sum_{h, j\, \neq\, g, i} \langle \dX_{gi},\, \dX_{hj} \rangle\, \D_{hj}, \hsp s_{gi} = \frac{\nu_{gi}\, \lambda_g}{1 + \nu_{gi}\, \lambda_g\, \Vert \dX_{gi} \Vert^{\, 2}}
\end{equation*}
and
\begin{equation*}
p_{gi} = \frac{\rho_{gi}}{\rho_{gi} + (1-\pi_1)\, (1-\pi_2)\, (1 + \nu_{gi}\, \lambda_g\, \Vert \dX_{gi} \Vert^{\, 2})^{-\frac{1}{2}}\, \exp\!\Big( \frac{s_{gi}\, h_{gi}^2}{2\, \oy} \Big)}
\end{equation*}
in which $\rho_{gi} = (1-\pi_1)\, \pi_2\, \ind_{\{ \D_{g \backslash i}\, \neq\, 0 \}}\, + \pi_1\, \ind_{\{ \D_{g \backslash i}\, =\, 0 \}}$.
\item The parameter $\oy$ satisfies
\begin{equation*}
\oy\, \vert\, \Theta_{\oy} ~\sim~ \cGIG\!\left( \frac{n-p+N_0 + u}{2},\, \D\, (\dX^t\, \dX + D_{\lambda \nu}^{-1})\, \D^t,\, \dY^{\, t}\, \dY + \frac{1}{v} \right)
\end{equation*}
where $D_{\lambda \nu} = \diag(\nu_{11}\, \lambda_1, \hdots, \nu_{1\kappa_1}\, \lambda_1, \hdots, \nu_{m1}\, \lambda_m, \hdots, \nu_{m\kappa_m}\, \lambda_m)$.
\item The parameter $\nu$ satisfies, for $g \in \llbracket 1, m \rrbracket$ and $i \in \llbracket 1, \kappa_g \rrbracket$,
\begin{equation*}
\nu_{gi}\, \vert\, \Theta_{\nu_{gi}} ~\sim~ \ind_{\{ \D_{gi}\, \neq\, 0 \}}\, \cGIG\!\left( \frac{1}{2},\, \frac{\D_{gi}^2}{\lambda_g\, \oy},\, 2\, \ell_{gi} \right) + \ind_{\{ \D_{gi}\, =\, 0 \}}\, \cE(\ell_{gi}).
\end{equation*}
\item The parameter $\lambda$ satisfies, for $g \in \llbracket 1, m \rrbracket$,
\begin{equation*}
\lambda_g\, \vert\, \Theta_{\lambda_g} ~\sim~ \ind_{\{ \D_g\, \neq\, 0 \}}\, \cGIG\!\left( \frac{N_{0g} + 1}{2},\, \frac{\gD_g\, D_{\nu_g}^{-1}\, \gD_g^t}{\oy},\, 2\, \gamma_g \right) + \ind_{\{ \D_{g}\, =\, 0 \}}\, \Gamma(\alpha_g,\, \gamma_g)
\end{equation*}
where $D_{\nu_g} = \diag(\nu_{g1}, \hdots, \nu_{g\kappa_g})$.
\item The parameter $\pi$ satisfies, for $j \in \llbracket 1, 2 \rrbracket$,
\begin{equation*}
\pi_j\, \vert\, \Theta_{\pi_j} ~\sim~ \beta\big( A_j+a_j,\, B_j+b_j \big).
\end{equation*}
where $A_1 = G_0$, $B_1 = m-G_0$, $A_2 = J_0$ and $B_2 = p-N_0$.
\end{itemize}
\end{cor}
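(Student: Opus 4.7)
The plan is to derive Corollary \ref{CorPostSGS1} directly as a specialization of Proposition \ref{PropPostSGS} to the scalar case $q=1$, using the univariate reductions of the distributions introduced in Definitions \ref{DefN}--\ref{DefB}. Since the hierarchical structure of \eqref{ModHieraSGS} is preserved under $q=1$ (only the ambient dimension of the response changes), it suffices to translate each full conditional of Proposition \ref{PropPostSGS} through the appropriate dimensional collapse rather than to redo the joint-density calculation from scratch.

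First I would treat the $\D_{gi}$ conditional. The matrix normal $\cMN_{1 \times 1}$ on each $\D_{gi}$ reduces to the univariate $\cN$ of Definition \ref{DefN}; the vector $H_{gi} = \Oy\, \dY^{\, t}\, \dX_{gi} + \sum \langle \dX_{gi},\, \dX_{hj} \rangle\, \D_{hj}$ collapses to the scalar $h_{gi} = \oy\, \langle \dX_{gi},\, \dY \rangle + \sum \langle \dX_{gi},\, \dX_{hj} \rangle\, \D_{hj}$, and the Hermitian form $H_{gi}^{\, t}\, \Oyi\, H_{gi}$ in the exponent defining $p_{gi}$ becomes simply $h_{gi}^2 / \oy$. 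The scalar $s_{gi}$ and the mixture weights $\rho_{gi}$ are unchanged beyond replacing the normalization exponent $-q/2$ by $-1/2$.

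Second I would specialize the $\Oy$ conditional. The matrix $\cMGIG_q$ becomes a scalar $\cGIG$ by the one-dimensional case of Definition \ref{DefGIG}. The $A$-argument $\D (\dX^t\, \dX + D_{\lambda \nu}^{-1})\, \D^t$ is already scalar when $q=1$, so it persists verbatim. Since the prior on $\oy$ is $\Gamma(u/2,\, 1/(2v))$, the parametrization of Definition \ref{DefW} identifies $V = v$, hence $V^{-1} = 1/v$ appears in the $B$-argument, giving $\dY^{\, t}\, \dY + 1/v$. The $\nu_{gi}$ conditional simplifies analogously: $\D_{gi}^t\, \Oyi\, \D_{gi}$ becomes $\D_{gi}^2 / \oy$, and since $\alpha = (q+1)/2 = 1$, the Gamma fall-back $\Gamma(\alpha,\, \ell_{gi})$ collapses to $\cE(\ell_{gi})$ by the last part of Definition \ref{DefW}.

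Finally, for $\lambda_g$ the shape parameter $qN_{0g}+1$ becomes $N_{0g}+1$, and the trace $\tr(D_{\nu_g}^{-1} \gD_g^t\, \Oyi\, \gD_g)$ reduces to $\gD_g\, D_{\nu_g}^{-1}\, \gD_g^t / \oy$ since $\gD_g$ is a $1 \times \kappa_g$ row vector and $\oy$ is scalar. The $\pi_j$ conditional is dimension-free and passes through unchanged. There is no genuine obstacle here: all the derivation work — isolating each conditional from the joint posterior of \eqref{ModHieraSGS} — has already been done in the proof of Proposition \ref{PropPostSGS}. The only point requiring care is book-keeping of the parametrization conventions ($V = v$ in the Wishart/Gamma reduction and $\alpha = 1$ in the Gamma/Exponential reduction) so that the univariate forms match those used elsewhere in the paper.
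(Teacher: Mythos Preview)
Your proposal is correct and follows exactly the paper's approach: the paper's proof of Corollary \ref{CorPostSGS1} is the single line ``This is a consequence of Proposition \ref{PropPostSGS},'' and you have simply spelled out that specialization in detail, with the correct book-keeping on the Wishart/Gamma and Gamma/Exponential reductions.
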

\begin{proof}
This is a consequence of Proposition \ref{PropPostSGS}.
\end{proof}

In the simulation study of Section \ref{SecEmpSim}, Scen. 5 and 6 are dedicated to the sparse-group-sparse setting. Now, let us prove our assertions by a few computational steps.

\section{Conditional posterior distributions}
\label{SecPost}

\subsection{The sparse setting: proof of Proposition \ref{PropPostS}}
\label{SecPostS}

First of all, the full posterior distribution of the parameters conditional on $\dX$ and $\dY$ satisfies
\begin{eqnarray}
\label{FullS}
p(\D, \Oy, \lambda, \pi\, \vert\, \dY, \dX) & \propto & p(\dY\, \vert\, \dX, \D, \Oy)\, p(\D\, \vert\, \Oy, \lambda, \pi)\, p(\lambda)\, p(\Oy)\, p(\pi) \nonumber \\
 & \propto & \vert \Oy \vert^{\frac{n}{2}}\, \exp\!\left(\! -\frac{1}{2}\, \left\Vert (\dY + \dX\, \D^t\, \Oyi)\, \Oy^{\frac{1}{2}} \right\Vert_{F}^2 \right) \nonumber \\
 & & \hsp \times~ \prod_{i=1}^{p} \Bigg[ \frac{1-\pi}{\sqrt{\lambda_i^{\, q}\, \vert \Oy \vert}}\, \exp\!\left(\! -\frac{\D_i^t\, \Oyi \D_i}{2\, \lambda_i} \right)\! \ind_{\{ \D_i\, \neq\, 0 \}} \nonumber \\
 & & \hsp \hsp \hsp \hsp \hsp +~ \pi\, \ind_{\{ \D_i\, =\, 0 \}} \Bigg] \lambda_i^{\frac{1}{2}(q+1)-1}\, \de^{-\ell_i\, \lambda_i} \nonumber \\
 & & \hsp \times~ \vert \Oy \vert^{\frac{u-q-1}{2}}\, \exp\!\left(\! -\frac{\tr(V^{-1}\, \Oy)}{2} \right) \pi^{\, a-1}\, (1-\pi)^{\, b-1}.
\end{eqnarray}
On the one hand, exploiting the cyclic property of the trace, a tedious calculation shows that, for all $1 \leq i \leq p$,
\begin{eqnarray}
\label{DecompS}
\left\Vert (\dY + \dX\, \D^t\, \Oyi)\, \Oy^{\frac{1}{2}} \right\Vert_{F}^2 & = & \tr( \dY^{\, t}\, \dY\, \Oy) + 2\, \tr(\dX^t\, \dY \D) + \tr( \dX^t\, \dX\, \D^t\, \Oyi \D) \nonumber \\
 & = & \Vert \dX_i \Vert^{\, 2}\, \D_i^t\, \Oyi \D_i + 2 \sum_{j\, \neq\, i} \langle \dX_i,\, \dX_j \rangle\, \D_j^t\, \Oyi \D_i + 2\, \dX_i^{\, t}\, \dY \D_i + T_{\neq\, i}
\end{eqnarray}
where the term $T_{\neq\, i}$ does not depend on $\D_i$. Thus,
\begin{eqnarray}
\label{CondDiS}
p(\D_i\, \vert\, \Theta_{\D_i}) & \propto & \exp\!\left(\! -\frac{1}{2}\, \Vert \dX_i \Vert^{\, 2}\, \D_i^t\, \Oyi \D_i - \sum_{j\, \neq\, i} \langle \dX_i,\, \dX_j \rangle\,  \D_j^t\, \Oyi \D_i - \dX_i^{\, t}\, \dY \D_i \right) \nonumber \\
 & & \hsp \times~ \left[ \frac{1-\pi}{\sqrt{\lambda_i^{\, q}\, \vert \Oy \vert}}\,\, \exp\!\left(\! -\frac{\D_i^t\, \Oyi \D_i}{2\, \lambda_i} \right)\! \ind_{\{ \D_i\, \neq\, 0 \}} + \pi\, \ind_{\{ \D_i\, =\, 0 \}} \right] \nonumber \\
 & = & \exp\!\left(\! -\frac{1}{2}\, (\D_i + s_i\, H_i)^{\, t}\, (s_i\, \Oy)^{-1}\, (\D_i + s_i\, H_i) \right) \nonumber \\
 & & \hsp \times~ \exp\!\left(\! \frac{s_i\, H_i^{\, t}\, \Oyi\, H_i}{2} \right) \frac{1-\pi}{\sqrt{\lambda_i^{\, q}\, \vert \Oy \vert}}\, \ind_{\{ \D_i\, \neq\, 0 \}} + \pi\, \ind_{\{ \D_i\, =\, 0 \}}
\end{eqnarray}
for all $1 \leq i \leq p$, where
\begin{equation*}
H_i = \Oy\, \dY^{\, t}\, \dX_i + \sum_{j\, \neq\, i} \langle \dX_i,\, \dX_j \rangle\, \D_j \hsp \text{and} \hsp s_i = \frac{\lambda_i}{1 + \lambda_i\, \Vert \dX_i \Vert^{\, 2}}.
\end{equation*}
This is still a multivariate Gaussian spike-and-slab distribution such that, by renormalizing, the spike has probability
\begin{equation*}
p_i = \dP(\D_i = 0\, \vert\, \Theta_{\D_i}) = \frac{\pi}{\pi + (1-\pi)\, (1 + \lambda_i\, \Vert \dX_i \Vert^{\, 2})^{-\frac{q}{2}}\, \exp\!\Big( \frac{s_i\, H_i^{\, t}\, \Oyi H_i}{2} \Big)}.
\end{equation*}
On the other hand, coming back to \eqref{DecompS}, we can also write
\begin{equation*}
\left\Vert (\dY + \dX\, \D^t\, \Oyi)\, \Oy^{\frac{1}{2}} \right\Vert_{F}^2 = \tr( \dY^{\, t}\, \dY\, \Oy) + \tr( \D\, \dX^t\, \dX\, \D^t\, \Oyi) + T_{\neq\, y}
\end{equation*}
where $T_{\neq\, y}$ does not depend on $\Oy$. That leads, \textit{via} \eqref{FullS}, to
\begin{eqnarray}
\label{CondOyS}
p(\Oy\, \vert\, \Theta_{\Oy}) & \propto & \vert \Oy \vert^{\frac{n-p+N_0 +u-q-1}{2}}\, \exp\!\Bigg(\!-\frac{1}{2}\, \tr( (\dY^{\, t}\, \dY + V^{-1})\, \Oy) \nonumber \\
 & & \hsp -~ \frac{1}{2}\! \left( \tr(\D\, \dX^t\, \dX\, \D^t\, \Oyi) + \sum_{\D_i\, \neq\, 0} \frac{\D_i^t\, \Oyi \D_i}{\lambda_i} \right)\! \Bigg) \nonumber \\
 & = & \vert \Oy \vert^{\frac{n-p+N_0 +u-q-1}{2}}\, \exp\!\left(\! -\frac{1}{2}\, \tr\big( (\dY^{\, t}\, \dY + V^{-1})\, \Oy + \D\, (\dX^t\, \dX + D_\lambda^{-1})\, \D^t\, \Oyi \big) \right)
\end{eqnarray}
where $N_0$ is given in \eqref{N0} and $D_\lambda = \diag(\lambda_1, \hdots, \lambda_p)$. Finally, it is easy to see that, for all $1 \leq i \leq p$,
\begin{equation}
\label{CondLiS}
p(\lambda_i\, \vert\, \Theta_{\lambda_i}) ~\propto~ \frac{1}{\sqrt{\lambda_i}}\, \exp\!\left(\! -\frac{\D_i^t\, \Oyi \D_i}{2\, \lambda_i} -\ell_i\, \lambda_i \right)\! \ind_{\{ \D_i\, \neq\, 0 \}} + \lambda_i^{\frac{1}{2}(q+1)-1}\, \de^{-\ell_i\, \lambda_i} \, \ind_{\{ \D_i\, =\, 0 \}}
\end{equation}
whereas
\begin{equation}
\label{CondPiS}
p(\pi\, \vert\, \Theta_{\pi}) ~\propto~ \pi^{N_0+a-1}\, (1-\pi)^{p-N_0+b-1}.
\end{equation}
We recognize in \eqref{CondDiS}, \eqref{CondOyS}, \eqref{CondLiS} and \eqref{CondPiS} the announced conditional posterior distributions, which concludes the proof. \qed

\subsection{The group-sparse setting: proof of Proposition \ref{PropPostGS}}
\label{SecPostGS}

The full posterior distribution of the parameters conditional on $\dX$ and $\dY$ satisfies
\begin{eqnarray}
\label{FullGS}
p(\D, \Oy, \lambda, \pi\, \vert\, \dY, \dX) & \propto & p(\dY\, \vert\, \dX, \D, \Oy)\, p(\D\, \vert\, \Oy, \lambda, \pi)\, p(\lambda)\, p(\Oy)\, p(\pi) \nonumber \\
 & \propto & \vert \Oy \vert^{\frac{n}{2}}\, \exp\!\left(\! -\frac{1}{2}\, \left\Vert (\dY + \dX\, \D^t\, \Oyi)\, \Oy^{\frac{1}{2}} \right\Vert_{F}^2 \right) \nonumber \\
 & & \hsp \times~ \prod_{g=1}^m \Bigg[ \frac{1-\pi}{\sqrt{\lambda_g^{\, q\, \kappa_g}\, \vert \Oy \vert^{\kappa_g}}}\, \exp\!\left(\! -\frac{\tr(\gD_g^t\, \Oyi \gD_g)}{2\, \lambda_g} \right)\! \ind_{\{ \D_g\, \neq\, 0 \}} \nonumber \\
 & & \hsp \hsp \hsp \hsp \hsp +~ \pi\, \ind_{\{ \D_g\, =\, 0 \}} \Bigg] \lambda_g^{\frac{1}{2}(q\, \kappa_g+1) - 1}\, \de^{-\ell_g\, \lambda_g} \nonumber \\
 & & \hsp \times~ \vert \Oy \vert^{\frac{u-q-1}{2}}\, \exp\!\left(\! -\frac{\tr(V^{-1}\, \Oy)}{2} \right) \pi^{\, a-1}\, (1-\pi)^{\, b-1}.
\end{eqnarray}
Like in the previous proof, a first important step is to note that, for all $1 \leq g \leq m$,
\begin{eqnarray}
\label{DecompGS}
\left\Vert (\dY + \dX\, \D^t\, \Oyi)\, \Oy^{\frac{1}{2}} \right\Vert_{F}^2 & = & \left\Vert \dY\, \Oy^{\frac{1}{2}} + \sum_{j=1}^m \gdX_j\, \gD_j^t\, \Oy^{-\frac{1}{2}} \right\Vert_{F}^2 \nonumber \\
 & = & \Vert \gdX_g\, \gD_g^t\, \Oy^{-\frac{1}{2}} \Vert_{F}^2 + 2\, \sum_{j\, \neq\, g} \tr(\gD_j\, \gdX_j^{\, t}\, \gdX_g\, \gD_g^t\, \Oyi) \nonumber \\
 & & \hsp \hsp \hsp \hsp \hsp +~ 2\, \tr(\gdX_g^{\, t}\, \dY\, \gD_g) + T_{\neq\, g}
\end{eqnarray}
where the term $T_{\neq\, g}$ does not depend on $\gD_g$. Thus, after a tedious calculation exploiting the cyclic property of the trace, one can obtain the factorization
\begin{eqnarray}
\label{CondDgGS}
p(\gD_g\, \vert\, \Theta_{\D_g}) & \propto & \exp\!\left(\! -\frac{1}{2}\, \Vert \gdX_g\, \gD_g^t\, \Oy^{-\frac{1}{2}} \Vert_{F}^2 - \sum_{j\, \neq\, g} \tr(\gD_j\, \gdX_j^{\, t}\, \gdX_g\, \gD_g^t\, \Oyi) - \tr(\gdX_g^{\, t}\, \dY\, \gD_g) \right) \nonumber \\
 & & \hsp \times~ \left[ \frac{1-\pi}{\sqrt{\lambda_g^{\, q\, \kappa_g}\, \vert \Oy \vert^{\kappa_g}}}\,\, \exp\!\left(\! -\frac{\tr(\gD_g^t\, \Oyi \gD_g)}{2\, \lambda_g} \right)\! \ind_{\{ \D_g\, \neq\, 0 \}} + \pi\, \ind_{\{ \D_g\, =\, 0 \}} \right] \nonumber \\
 & = & \exp\!\left(\! -\frac{1}{2}\, \tr\big( S_g^{-1} (\gD_g + H_g\, S_g)^t\, \Oyi (\gD_g + H_g\, S_g) \big) \right) \nonumber \\
 & & \hsp \times~ \exp\!\left( \frac{\tr(H_g^{\, t}\, \Oyi\, H_g\, S_g)}{2} \right) \frac{1-\pi}{\sqrt{\lambda_g^{\, q\, \kappa_g}\, \vert \Oy \vert^{\kappa_g}}}\, \ind_{\{ \D_g\, \neq\, 0 \}} + \pi\, \ind_{\{ \D_g\, =\, 0 \}}
\end{eqnarray}
for all $1 \leq g \leq m$, where
\begin{equation*}
H_g = \Oy\, \dY^{\, t}\, \gdX_g + \sum_{j\, \neq\, g} \gD_j\, \gdX_j^{\, t}\, \gdX_g \hsp \text{and} \hsp S_g = \lambda_g \big( I_{\kappa_g} + \lambda_g\, \gdX_g^{\, t}\, \gdX_g \big)^{-1}.
\end{equation*}
We recognize the announced Gaussian spike-and-slab distribution, and the probability of the spike is given, after renormalization, by
\begin{equation*}
p_g = \dP(\gD_g = 0\, \vert\, \Theta_{\D_g}) = \frac{\pi}{\pi + (1-\pi)\, \vert I_{\kappa_g} + \lambda_g\, \gdX_g^{\, t}\, \gdX_g \vert^{-\frac{q}{2}}\, \exp\!\Big( \frac{\tr(H_g^{\, t}\, \Oyi H_g\, S_g)}{2} \Big)}.
\end{equation*}
Following the same lines as the ones used to establish \eqref{CondOyS}, we obtain from \eqref{FullGS} the conditional distribution
\begin{eqnarray}
\label{CondOyGS}
p(\Oy\, \vert\, \Theta_{\Oy}) & \propto & \vert \Oy \vert^{\frac{n-p+N_0 +u-q-1}{2}}\, \exp\!\Bigg(\!-\frac{1}{2}\, \tr( (\dY^{\, t}\, \dY + V^{-1})\, \Oy) \nonumber \\
 & & \hsp -~ \frac{1}{2} \Bigg( \tr(\D\, \dX^t\, \dX\, \D^t\, \Oyi) + \sum_{\D_g\, \neq\, 0} \frac{\tr(\gD_g^t\, \Oyi \gD_g)}{\lambda_g} \Bigg)\! \Bigg) \nonumber \\
 & = & \vert \Oy \vert^{\frac{n-p+N_0 +u-q-1}{2}}\, \exp\!\left(\! -\frac{1}{2}\, \tr\big( (\dY^{\, t}\, \dY + V^{-1})\, \Oy + \D\, (\dX^t\, \dX + D_\lambda^{-1})\, \D^t\, \Oyi \big) \right)
\end{eqnarray}
where $D_\lambda = \diag(\lambda_1, \hdots, \lambda_1, \hdots, \lambda_m, \hdots, \lambda_m)$ with each $\lambda_g$ duplicated $\kappa_g$ times, and since we can note that, due to the continuous nature of $\D\, \vert \{ \D \neq 0 \}$,
\begin{equation*}
\sum_{g=1}^{m} \kappa_g \ind_{\{ \D_g\, \neq\, 0 \}} = p-N_0
\end{equation*}
for $N_0$ given in \eqref{N0}. Next, we obtain in a simpler way that, for all $1 \leq g \leq m$,
\begin{eqnarray}
\label{CondLgGS}
p(\lambda_g\, \vert\, \Theta_{\lambda_g}) & \propto&  \frac{1}{\sqrt{\lambda_g}}\, \exp\!\left(\! -\frac{\tr(\gD_g^t\, \Oyi \gD_g)}{2\, \lambda_g} -\ell_g\, \lambda_g \right)\! \ind_{\{ \D_g\, \neq\, 0 \}} \nonumber \\
& & \hsp +~ \lambda_g^{\frac{1}{2}(q\, \kappa_g+1) - 1}\, \de^{-\ell_g\, \lambda_g} \, \ind_{\{ \D_g\, =\, 0 \}}.
\end{eqnarray}
Finally,
\begin{equation}
\label{CondPiGS}
p(\pi\, \vert\, \Theta_{\pi}) ~\propto~ \pi^{G_0+a-1}\, (1-\pi)^{m-G_0+b-1}
\end{equation}
where $G_0$ is defined in \eqref{G0}. We can check that the conditional distributions \eqref{CondDgGS}, \eqref{CondOyGS}, \eqref{CondLgGS} and \eqref{CondPiGS} correspond to the ones announced in the proposition, which concludes the proof. \qed

\subsection{The sparse-group-sparse setting: proof of Proposition \ref{PropPostSGS}}
\label{SecPostSGS}

The full posterior distribution of the parameters conditional on $\dX$ and $\dY$ satisfies
\begin{eqnarray}
\label{FullSGS}
p(\D, \Oy, \nu, \lambda, \pi\, \vert\, \dY, \dX) & \propto & p(\dY\, \vert\, \dX, \D, \Oy)\, p(\D\, \vert\, \Oy, \nu, \lambda, \pi)\, p(\nu)\, p(\lambda)\, p(\Oy)\, p(\pi) \nonumber \\
 & \propto & \vert \Oy \vert^{\frac{n}{2}}\, \exp\!\left(\! -\frac{1}{2}\, \left\Vert (\dY + \dX\, \D^t\, \Oyi)\, \Oy^{\frac{1}{2}} \right\Vert_{F}^2 \right) \nonumber \\
 & & \hsp \times~ \prod_{g=1}^m \Bigg[ \big( (1-\pi_1)\, P_g\, \ind_{\{ \D_g\, \neq\, 0 \}} + \pi_1\, \ind_{\{ \D_g\, =\, 0 \}} \big) \nonumber \\
 & & \hsp \hsp \hsp \hsp \hsp \times~ \lambda_g^{\frac{1}{2} (q\, \kappa_g+1)-1}\, \de^{-\gamma_g\, \lambda_g} \prod_{i=1}^{\kappa_g} \nu_{gi}^{\frac{1}{2} (q+1)-1}\, \de^{-\ell_{gi}\, \nu_{gi}} \Bigg] \nonumber \\
 & & \hsp \times~ \vert \Oy \vert^{\frac{u-q-1}{2}}\, \exp\!\left(\! -\frac{\tr(V^{-1}\, \Oy)}{2} \right) \prod_{j=1}^2 \pi_j^{\, a_j-1}\, (1-\pi_j)^{\, b_j-1}
\end{eqnarray}
where, for $1 \leq g \leq m$,
\begin{equation*}
P_g = \prod_{i=1}^{\kappa_g} \left[ \frac{1-\pi_2}{\sqrt{(\nu_{gi}\, \lambda_g)^{\, q}\, \vert \Oy \vert}}\, \exp\!\left(\! -\frac{\D_{gi}^t\, \Oyi \D_{gi}}{2\, \nu_{gi}\, \lambda_g} \right)\! \ind_{\{ \D_{gi}\, \neq\, 0 \}} + \pi_2\, \ind_{\{ \D_{gi}\, =\, 0 \}} \right].
\end{equation*}
Using the same decompositions as \eqref{DecompS} or \eqref{DecompGS}, the full posterior distribution given above leads to
\begin{eqnarray}
\label{CondDgiSGS}
p(\D_{gi}\, \vert\, \Theta_{\D_{gi}}) & \propto & \exp\!\left(\! -\frac{1}{2}\, \Vert \dX_{gi} \Vert^{\, 2}\, \D_{gi}^t\, \Oyi \D_{gi} - \sum_{h, j\, \neq\, g, i} \langle \dX_{gi},\, \dX_{hj} \rangle\,  \D_{hj}^t\, \Oyi \D_{gi} - \dX_{gi}^{\, t}\, \dY \D_{gi} \right) \nonumber \\
 & & \hsp \times~ \Bigg[ (1-\pi_1) \Bigg[ \frac{1-\pi_2}{\sqrt{(\nu_{gi}\, \lambda_g)^{\, q}\, \vert \Oy \vert}}\,\, \exp\!\left(\! -\frac{\D_{gi}^t\, \Oyi \D_{gi}}{2\, \nu_{gi}\, \lambda_g} \right)\! \ind_{\{ \D_{gi}\, \neq\, 0 \}} \nonumber \\
 & & \hsp \hsp \hsp \hsp \hsp +~ \pi_2\, \ind_{\{ \D_{gi}\, =\, 0 \}} \Bigg] \ind_{\{ \D_g\, \neq\, 0 \}} + \pi_1\, \ind_{\{ \D_g\, =\, 0 \}} \Bigg] \nonumber \\
 & = & \exp\!\left(\! -\frac{1}{2}\, (\D_{gi} + s_{gi}\, H_{gi})^{\, t}\, (s_{gi}\, \Oy)^{-1}\, (\D_{gi} + s_{gi}\, H_{gi}) \right) \nonumber \\
 & & \hsp \times~ \exp\!\left(\! \frac{s_{gi}\, H_{gi}^{\, t}\, \Oyi\, H_{gi}}{2} \right) \frac{(1-\pi_1)\, (1-\pi_2)}{\sqrt{(\nu_{gi}\, \lambda_g)^{\, q}\, \vert \Oy \vert}}\, \ind_{\{ \D_{gi}\, \neq\, 0 \}} \nonumber \\
 & & \hsp +~ \big( (1-\pi_1)\, \pi_2\, \ind_{\{ \D_{g \backslash i}\, \neq\, 0 \}}\, + \pi_1\, \ind_{\{ \D_{g \backslash i}\, =\, 0 \}} \big)\, \ind_{\{ \D_{gi}\, =\, 0 \}}
\end{eqnarray}
for $1 \leq g \leq m$ and $1 \leq i \leq \kappa_g$, where $\gD_{g \backslash i}$ is $\gD_g$ deprived of $\D_{gi}$,
\begin{equation*}
H_{gi} = \Oy\, \dY^{\, t}\, \dX_{gi} + \sum_{h, j\, \neq\, g, i} \langle \dX_{gi},\, \dX_{hj} \rangle\, \D_{hj} \hsp \text{and} \hsp s_{gi} = \frac{\nu_{gi}\, \lambda_g}{1 + \nu_{gi}\, \lambda_g\, \Vert \dX_{gi} \Vert^{\, 2}}.
\end{equation*}
Here, we used the binary equalities stemming from $\{ \D_{gi}\, \neq\, 0 \} \cap \{\gD_g\, \neq\, 0 \} = \{ \D_{gi}\, \neq\, 0 \}$, $\{ \D_{gi}\, =\, 0 \} \cap \{ \gD_g\, \neq\, 0 \} = \{ \D_{gi}\, =\, 0 \} \cap \{ \gD_{g \backslash i}\, \neq\, 0 \}$ and $\{ \D_{gi}\, =\, 0 \} \cap \{ \gD_g\, =\, 0 \} = \{ \D_{gi}\, =\, 0 \} \cap \{ \gD_{g \backslash i}\, =\, 0 \}$, which turn out to be very useful to separate $\D_{gi}$ and $\Theta_{\D_{gi}}$. This is characteristic of a multivariate Gaussian spike-and-slab distribution. By renormalizing, one can see that the spike has probability
\begin{equation*}
p_{gi} = \dP(\D_{gi} = 0\, \vert\, \Theta_{\D_{gi}}) = \frac{\rho_{gi}}{\rho_{gi} + (1-\pi_1)\, (1-\pi_2)\, (1 + \nu_{gi}\, \lambda_g\, \Vert \dX_{gi} \Vert^{\, 2})^{-\frac{q}{2}}\, \exp\!\Big( \frac{s_{gi}\, H_{gi}^{\, t}\, \Oyi\, H_{gi}}{2} \Big)}
\end{equation*}
with
\begin{equation*}
\rho_{gi} = (1-\pi_1)\, \pi_2\, \ind_{\{ \D_{g \backslash i}\, \neq\, 0 \}}\, + \pi_1\, \ind_{\{ \D_{g \backslash i}\, =\, 0 \}}.
\end{equation*}
Next, following \eqref{FullSGS} and the reasoning used to establish \eqref{CondOyS}, we may also write
\begin{eqnarray}
\label{CondOySGS}
p(\Oy\, \vert\, \Theta_{\Oy}) & \propto & \vert \Oy \vert^{\frac{n-p+N_0 +u-q-1}{2}}\, \exp\!\Bigg(\!-\frac{1}{2}\, \tr( (\dY^{\, t}\, \dY + V^{-1})\, \Oy) \nonumber \\
 & & \hsp -~ \frac{1}{2} \Bigg( \tr(\D\, \dX^t\, \dX\, \D^t\, \Oyi) + \sum_{\D_{gi}\, \neq\, 0} \frac{\D_{gi}^t\, \Oyi \D_{gi}}{\nu_{gi}\, \lambda_g} \Bigg)\! \Bigg) \nonumber \\
 & = & \vert \Oy \vert^{\frac{n-p+N_0 +u-q-1}{2}}\, \exp\!\left(\! -\frac{1}{2}\, \tr\big( (\dY^{\, t}\, \dY + V^{-1})\, \Oy + \D\, (\dX^t\, \dX + D_{\lambda \nu}^{-1})\, \D^t\, \Oyi \big) \right)
\end{eqnarray}
where $N_0$ is given in \eqref{N0} and $D_{\lambda \nu} = \diag(\nu_{11} \lambda_1, \hdots, \nu_{1\kappa_1} \lambda_1, \hdots, \nu_{m1} \lambda_m, \hdots, \nu_{m\kappa_m} \lambda_m)$. The shrinkage parameters $\nu$ and $\lambda$ are easier to handle. For $1 \leq g \leq m$ and $1 \leq i \leq \kappa_g$,
\begin{eqnarray}
\label{CondNgiSGS}
p(\nu_{gi}\, \vert\, \Theta_{\nu_{gi}}) & \propto & \frac{1}{\sqrt{\nu_{gi}}}\, \exp\!\left(\! -\frac{\D_{gi}^t\, \Oyi \D_{gi}}{2\, \nu_{gi}\, \lambda_g} -\ell_{gi}\, \nu_{gi} \right)\! \ind_{\{ \D_{gi}\, \neq\, 0 \}} \nonumber \\
& & \hsp +~ \nu_{gi}^{\frac{1}{2}(q+1)-1}\, \de^{-\ell_{gi}\, \nu_{gi}} \, \ind_{\{ \D_{gi}\, =\, 0 \}}
\end{eqnarray}
whereas
\begin{eqnarray}
\label{CondLgSGS}
p(\lambda_g\, \vert\, \Theta_{\lambda_g}) & \propto & \lambda_g^{\frac{q N_{0g} - 1}{2}} \exp\!\left(\! -\frac{\tr(D_{\nu_g}^{-1} \gD_g^t\, \Oyi \gD_g)}{2\, \lambda_g} -\gamma_g\, \lambda_g \right)\! \ind_{\{ \D_g\, \neq\, 0 \}} \nonumber \\
 & & \hsp +~ \lambda_g^{\frac{1}{2}(q\, \kappa_g+1)-1}\, \de^{-\gamma_g\, \lambda_g} \, \ind_{\{ \D_g\, =\, 0 \}}
\end{eqnarray}
where $N_{0g}$ is defined in \eqref{N0g} and $D_{\nu_g} = \diag(\nu_{g1}, \hdots, \nu_{g\kappa_g})$. Finally,
\begin{equation}
\label{CondPi1SGS}
p(\pi_1\, \vert\, \Theta_{\pi_1}) ~\propto~ \pi_1^{G_0+a_1-1}\, (1-\pi_1)^{m-G_0+b_1-1}
\end{equation}
and
\begin{equation}
\label{CondPi2SGS}
p(\pi_2\, \vert\, \Theta_{\pi_2}) ~\propto~ \pi_2^{J_0+a_2-1}\, (1-\pi_2)^{p-N_0+b_2-1}
\end{equation}
where $G_0$ and $J_0$ are given in \eqref{G0} and \eqref{J0}, respectively. For the latter result, we used the fact that the number of non-zero columns in the non-zero groups must coincide with the number of non-zero columns of $\D$, that is $p-N_0$. Like in the previous proofs, we recognize the announced conditional distributions in \eqref{CondDgiSGS}, \eqref{CondOySGS}, \eqref{CondNgiSGS}, \eqref{CondLgSGS}, \eqref{CondPi1SGS} and \eqref{CondPi2SGS}. That concludes these tedious calculations. \qed

\subsection{Proof of Proposition \ref{PropSupRecGS}}
\label{SecSupRecGS}

The result is obtained by following the steps of the proof of Thm 2.1 in \cite{YangNarisetty20} but, beforehand, we need to clarify a few points to extend the reasoning of the authors from $q = 1$ to $q \geq 1$ and take into account the adaptative shrinkage. For any model $k$, let $\cK = \{ k \text{ is selected} \}$ so that $\cK = \cT$ when the true model $t$ is considered. First, recall that $\lambda$ and $\pi$ are fixed and rewrite \eqref{FullGS} like
\begin{eqnarray}
\label{FullGSSup}
\dP_{\Delta}(\cK\, \vert\, \dY, \dX, \Oy) & \propto & \exp\!\left(\! -\frac{1}{2}\, \left\Vert (\dY + \dX_{(k)}\, \D_{(k)}^t\, \Oyi)\, \Oy^{\frac{1}{2}} \right\Vert_{F}^2 \right) \nonumber \\
 & & \hsp \times~ \frac{(1-\pi)^{\vert k \vert}}{\pi^{\vert k \vert}\, \sqrt{\vert \Lambda_k \vert^q\, \vert \Oy \vert^{k_r}}}\, \exp\!\left(\! -\frac{\tr(\D_{(k)}^t\, \Oyi \D_{(k)}\, D_k^{-1})}{2} \right) \nonumber \\
 & \propto & \frac{(1-\pi)^{\vert k \vert}}{\pi^{\vert k \vert}\, \sqrt{\vert \Lambda_k \vert^q\, \vert \Oy \vert^{k_r}}}\, \exp\!\left(\! \frac{\tr( \Dt_{(k)}\, F_k\, \Dt_{(k)}^t\, \Oyi )}{2} \right) \nonumber \\
  & & \hsp \times~ \exp\!\left(\! -\frac{1}{2}\, \tr\!\left( (\D_{(k)} - \Dt_{(k)})\, F_k\, (\D_{(k)} - \Dt_{(k)})^t\, \Oyi \right) \right)
\end{eqnarray}
where $F_k = D_k^{-1} + \dX_{(k)}^t\, \dX_{(k)}$, $D_k = \diag((\lambda_\ell, \hdots, \lambda_\ell)_{\ell\, \in\, k})$ with each $\lambda_\ell$ duplicated $\kappa_\ell$ times, $k_r = \Vert (\kappa_\ell)_{\ell\, \in\, k} \Vert_1$, $\Lambda_k = \diag((\lambda_\ell^{\kappa_\ell})_{\ell\, \in\, k})$ and
\begin{equation*}
\Dt_{(k)} = -\Oy\, \dY^t\, \dX_{(k)}\, F_k^{\, -1}.
\end{equation*}
Then, integrating over $\D_{(k)}$, it follows (see Def. \ref{DefN} with $\Sigma_1 = \Oy$ and $\Sigma_2 = F_k^{\, -1}$) that
\begin{eqnarray*}
\dP(\cK\, \vert\, \dY, \dX, \Oy) & = & \int_{\dR^{q \times \kappa_k}} \dP_{\Delta}(\cK\, \vert\, \dY, \dX, \Oy, \lambda, \pi)\, \dd \D_{(k)} \\
 & \propto & \frac{(1-\pi)^{\vert k \vert}}{\pi^{\vert k \vert}\, \sqrt{\vert \Lambda_k \vert^q\, \vert F_k \vert^q}}\, \exp\!\left(\! \frac{\tr( \Dt_{(k)}\, F_k\, \Dt_{(k)}^t\, \Oyi )}{2} \right) \\
 & \propto & \left( \frac{1-\pi}{\pi} \right)^{\! \vert k \vert} \vert \Lambda_k \vert^{-\frac{q}{2}}\, \vert F_k \vert^{-\frac{q}{2}}\, \exp\!\left( -\frac{1}{2}\, \tr\big( \dY^{*\, t}\, (I_n - \dX_{(k)}\, F_k^{\, -1}\, \dX_{(k)}^t)\, \dY^* \big) \right) \\
 & = & \left( \frac{1-\pi}{\pi} \right)^{\! \vert k \vert} \vert \Lambda_k \vert^{-\frac{q}{2}}\, \vert F_k \vert^{-\frac{q}{2}}\, \exp\!\left( -\frac{1}{2}\, \left( \textnormal{RSS}_k(\Dt^*_{(k)}) + \big\Vert \Dt^*_{(k)}\, D_k^{-\frac{1}{2}} \big\Vert^2_F \right) \right)
\end{eqnarray*}
where $\dY^* = \dY\, \Oy^{\frac{1}{2}}$, $\Dt^*_{(k)} = \Oy^{-\frac{1}{2}}\, \Dt_{(k)}$ and $\textnormal{RSS}_k\, : H \in \dR^{q \times \kappa_k} \mapsto \Vert \dY^* - \dX_{(k)}\, H^{\, t} \Vert_F^2$ is the residual sum of squares function in the renormalized linear model indexed by $k$, that is
\begin{equation*}
\dY^* = -\dX_{(k)}\, \D_{(k)}^t\, \Oy^{-\frac{1}{2}} + E^*
\end{equation*}
with $E^* = E\, \Oy^{\frac{1}{2}} \sim \cMN_{n \times q}(0,\, I_{n},\, I_{q})$. Thus, the so-called posterior ratio between any false model $k$ and $t$ is given by
\begin{equation*}
\textnormal{PR}(k, t) = \frac{\dP(\cK\, \vert\, \dY, \dX, \Oy)}{\dP(\cT\, \vert\, \dY, \dX, \Oy)} = \frac{Q_k}{Q_t} \left( \frac{1-\pi}{\pi} \right)^{\! \vert k \vert - \vert t \vert} \de^{-\frac{1}{2}\, (\wt{R}_k - \wt{R}_t)}
\end{equation*}
with $Q_k = \vert \Lambda_k \vert^{-\frac{q}{2}}\, \vert F_k \vert^{-\frac{q}{2}}\,$ and $\wt{R}_k = \textnormal{RSS}_k(\Dt^*_{(k)}) + \Vert \Dt^*_{(k)}\, D_k^{-\frac{1}{2}} \Vert^2_F$, using the notation of \cite{YangNarisetty20}. In particular, due to the generalized ridge penalty,
\begin{equation}
\Dt^*_{(k)} = \arg \min_{H} ~ \left( \textnormal{RSS}_k(H) + \big\Vert H D_k^{-\frac{1}{2}} \big\Vert^2_F \right)
\end{equation}
so that for nested models $k_1$ and $k_2$ (with $k_1 \subseteq k_2$), we must have $\wt{R}_{k_2} \leq \wt{R}_{k_1}$. Let also $R_k = \Vert (I_n - \Pi_{(k)})\, \dY^* \Vert_F^2 = \Vert (I_q \otimes (I_n - \Pi_{(k)}))\, \vec(\dY^*) \Vert_2^2$. Cochran's theorem entails the chi-squared distributions $R_t \sim \chi^2(q\, (n-r_t))$ and $R_t - R_k \sim \chi^2(q\, (r_k-r_t))$ for any `bigger' model $k \supset t$ and $q \geq 1$. Combining all these preliminary considerations, the strategy of \cite{YangNarisetty20} now applies and leads, under our revised hypotheses, to
\begin{equation*}
\frac{1-\dP(\cT\, \vert\, \dY, \dX, \Oy)}{\dP(\cT\, \vert\, \dY, \dX, \Oy)} = \sum_{k\, \neq\, t} \textnormal{PR}(k, t) \limp 0.
\end{equation*} \qed

\section{Empirical results}
\label{SecEmp}

In this section, let us call (s), (gs) and (sgs) the related settings, and let us denote by (ad) the adaptative shrinkage and by (gl) the global shrinkage. First of all, these models contain many hyperparameters that have to be carefully tuned. Our experiments showed that, unsurprinsingly, the results are strongly impacted by the prior amount of shrinkage on $\D$, driven by $\ell$ and even by $\gamma$ for (sgs). Apart from the usual cross-validation procedures, we could stay in line with our Bayesian approach and suggest conjugate Gamma hyperpriors. This is very easy to implement, but the hyperparameters are now replaced by other hyperparameters and the same questions arise. Instead, like in \cite{XuGosh15} and \cite{LiquetEtAl17}, we follow the idea of \cite{ParkCasella08} and we use a Monte-Carlo EM algorithm. By way of example, from the full posterior probability \eqref{FullS} and since $\lambda_i \sim \Gamma(q,\, \ell_i)$ for all $i$, it is not hard to see that, with (s),
\begin{equation*}
\ln p(\D, \Oy, \lambda, \pi\, \vert\, \dY, \dX) = \sum_{i=1}^p (q\, \ln \ell_i - \ell_i\, \lambda_i) + T_{\neq\, \ell}
\end{equation*}
where the term $T_{\neq\, \ell}$ does not depend on $\ell$. Thus, the $k$-th iteration of the EM algorithm should lead to
\begin{equation*}
\ell^{\, (k)}_i = \frac{\frac{1}{2} (q+1)}{\dE^{(k-1)}[\lambda_i\, \vert\, \dY,\, \dX]} \hsp \text{and} \hsp \ell^{\, (k)} = \frac{\frac{p}{2} (q+1)}{\sum_{i=1}^p \dE^{(k-1)}[\lambda_i\, \vert\, \dY,\, \dX]}
\end{equation*}
for the adaptative shrinkage and the global shrinkage ($\lambda_i = \lambda$), respectively. The intractable conditional expectations are then estimated with the help of the Gibbs samples. For (gs), the results are mainly the same as above (replace $q+1$ by $q \kappa_g+1$ in the first case, $p(q+1)$ by $q p+m$ in the second case and consider $1 \leq g \leq m$ instead of $1 \leq i \leq p$), and similar results also follow with (sgs). Recall that our definitions of the adaptative and global shrinkages are given in the corresponding sections, in the description of the hierarchical models. The tuning of $u$ and $V$ (or $v$) is actually trickier. Because $\dE[\Oy] = u V$, we set $V = \frac{1}{u} I_q$ and $u$ is conveniently chosen to be the smallest integer such that $\Oy$ is (almost surely) invertible, that is $u = q$ (see \textit{e.g.} \cite{BrownEtAl98}). This is particularly adapted when the dataset is standardized. Finally, $a$ and $b$ reflect the degree of sparsity to introduce in the direct links. We can set $a \gg b$ to promote sparse settings, which is potentially interesting when $p \gg n$, but $a=b=1$ is a standard non-informative choice and $a < b$ may also be useful for variable selection (see \textit{e.g.} the real dataset of Section \ref{SecEmpReal}). The posterior median is used to estimate $\D$ and get sparsity whereas the posterior mean is used to estimate $\Oy$. Due to the huge amount of calculations in the simulations, the estimations are made on the basis of $3000$ iterations of the sampler in which the first $2000$ are burn-ins. This is revised upwards for the real data ($10000$ iterations with $5000$ burn-ins).

\begin{rem}
To the best of our knowledge, there is no simple way to sample from the $\cMGIG_d$ distribution as soon as $d > 1$. The recent method described in Sec. 3.3.2 of \cite{FangEtAl20}, relying on the Matsumoto-Yor property (see Thm. 3.1 of \cite{MassamWesolowski06}) to get a $\cMGIG_d$ sample from the very standard $\cGIG$ and $\cW_d$ distributions, is unfortunately inapplicable in our contexte. Indeed, for example in the sparse setting, that would require finding $z \in \dR^q$ such that $\dY^{\, t}\, \dY + V^{-1} = b\, z z^t$ for some $b > 0$ whereas it has full rank. In \cite{FazayelliBanerjee16}, the authors show that $\cMGIG_d(\nu,\, A,\, B)$ is a unimodal distribution of which mode $M \in \dS_{++}^{\, d}$ is the unique solution of the algebraic Riccati equation $(d+1-2\, \nu)\, M + M B M = A$, and a standard importance sampling approach follows for the mean of the distribution. Our fallback solution is to solve this Riccati equation at each step and to replace all $\cMGIG_d$ random variables by the (unique) mode of the consecutive distributions. To assess the credibility of this \textit{ad hoc} sampling, the `oracle' models in which $\Oy$ and the shrinkage parameters are known are added to the simulations. We will see that, despite an unavoidable loss, the results remain pretty consistent. In particular, the support recovery does not appear to be impacted.
\end{rem}

\subsection{A simulation study}
\label{SecEmpSim}

In this section, the matrix of order $d \geq 1$ given by
\begin{equation*}
C_d = \big( \rho^{\vert i-j \vert} \big)_{1\, \leq\, i, j\, \leq\, d}
\end{equation*}
will be used as a typical covariance structure, for some $0 \leq \rho < 1$. Thus, the precision matrices will be chosen as a multiple of $C_d^{\, -1}$ to keep the same guideline in our simulations. The responses
\begin{equation*}
Y_k = B^{\, t}\, X_k + E_k
\end{equation*}
are generated through relations \eqref{Reparam} where, for all $1 \leq k \leq n$, $E_k \sim \cN(0, R)$. Because our models assume prior independence (or group-independence) in the columns of $\D$, it seems necessary to look at the influence of correlation among the predictors. So the standard choice $X_{k} \sim \cN(0, I_p)$ is first considered, but in some cases we will also test $X_{k} \sim \cN(0, C_p)$ for $\rho=0.5$ and $\rho=0.9$ to introduce a significant correlation between close predictors (see Figure \ref{FigMSPE}). For each experiment, the support recovery of $\D$ is evaluated thanks to the so-called $F$-score given by
\begin{equation*}
F = \frac{2\, p_{r}\, r_{e}}{p_{r}+r_{e}} \hsp \text{where} \hsp p_{r} = \frac{\textnormal{TP}}{\textnormal{TP} + \textnormal{FP}} \hsp \text{and} \hsp r_{e} = \frac{\textnormal{TP}}{\textnormal{TP} + \textnormal{FN}}
\end{equation*}
are the precision and the recall, respectively, and where T/F and P/N stand for true/false and positive/negative. To assess prediction skills, $n_e$ randomly chosen observations are used for estimation (for different $n_e$) and the remaining $n_v = n-n_e=100$ independent observations serve to compute the mean squared prediction error (MSPE). The results are compared to the ones obtained \textit{via} the penalized maximum of likelihood (PML) approach of \cite{YuanZhang14} thanks to the correctly adapted implementations of \cite{ChiquetEtAl17} and \cite{OkomeEtAl21}, with a cross-validated tuning parameter. In addition, we compute the sparse precision matrix estimations given by the graphical Lasso (GLasso) of \cite{FriedmanEtAl08}, and by the CLIME algorithm of \cite{CaiEtAl11}, using the \texttt{R} packages \texttt{glasso} and \texttt{fastclime}, respectively. Note that we always keep a small value for $q$, so $\D$ is penalized but not $\Oy$ when possible (PML and GLasso). Finally, the recent approach of \cite{RenEtAl15}, called ANT and based on the individual estimations of the partial correlations, is also implemented. Unlike PML, GLasso and CLIME, sparsity is not the result of penalizations for ANT but, instead, a threshold is deduced from the asymptotic normality of the estimates to decide which are significant and which can be set to zero. Let us add some preliminary comments about the methods compared in these simulations, all related to high-dimensional precision matrix estimation.
\begin{itemize}[label=$-$, leftmargin=*]
\item There is a important advantage in favor of our Bayesian approaches, PML and ANT because they do not need the estimation of $\Ox \in \dS_{++}^{\, p}$. Indeed, extracting the estimation of $\D \in \dR^{q \times p}$ and $\Oy \in \dS_{++}^{\, q}$ from that of the full precision matrix $\Omega \in \dS_{++}^{\, q+p}$ may generate a drastic bias when $p \gg q$, and that explains in particular why GLasso and CLIME give pretty bad results in what follows.
\item In its standard version, ANT is not designed to produce column-sparsity or group-sparsity in $\D$. So, by considering multiple testing at the column or even group level, we allow groups of coefficients to be zeroed simultaneously. We have observed that this modified ANT method (called ANT* in the simulations) loses a bit in prediction quality but is greatly  improved for support recovery.
\item Unfortunately, this is not appropriate for PML, GLasso and CLIME. It is therefore not surprising that they are largely outperformed by our Bayesian models and ANT* for (gs) and (sgs). Using group-penalties, which to the best of our knowledge still does not exist, should improve the results of these methods to some extent.
\end{itemize}

The seven scenarios below, from Scen. 0 to Scen. 6, as heterogeneous as possible, represent the diversity of the situations (high-dimensionality, kind of sparsity, dimension of the responses, coefficients hard to detect, etc.). We repeat each one $N=100$ or $N=50$ times, depending on the computation times involved, and the numerical results for $n_e=400$ and uncorrelated predictors are summarized in Table \ref{TabMSPE}. In addition, the evolution of MSPE is represented on Figure \ref{FigMSPE} for Scen. 1, 3 and 5, when $n_e$ grows from 100 to 500, both for uncorrelated and correlated predictors.
\begin{itemize}[label=$-$, leftmargin=*]
\item \textit{Scenario 0 (small dimension, no sparsity).} Let $q=1$, $p=5$ and set $\oy=1$. We fill $\Delta$ with $\cN(0, 2\, \oy)$ coefficients.
\item \textit{Scenario 1 (sparse direct links, univariate responses).} Let $q=1$, $p=50$ and set $\oy=1$. We randomly choose 10 locations of $\Delta$ filled with $\cN(0, \oy)$ coefficients while the others are zero.
\item \textit{Scenario 2 (sparse direct links, multivariate responses).} Let $q=2$, $p=80$ and set $\Oy = 2\, C_2^{\, -1}$ with $\rho=0.5$. We randomly choose 10 columns of $\Delta$ filled with $\cN_2(0, \Oy)$ coefficients while the others are zero.
\item \textit{Scenario 3 (group-sparse direct links, univariate responses).} Let $q=1$, $p=320$ and set $\oy=1$. We consider $m=5$ groups of size 100, 10, 100, 10 and 100. The two groups of size 10 are filled with $\cN(0, 0.5\, \oy)$ and $\cN(0, \oy)$ coefficients, respectively, while the other groups are zero.
\item \textit{Scenario 4 (group-sparse direct links, multivariate responses).} Let $q=3$, $p=500$ and set $\Oy = 3\, C_3^{\, -1}$ with $\rho=0.5$. We divide the columns of $\D$ into $m=25$ groups of size 20. We randomly choose 3 groups filled with $\cN_3(0, 0.5\, \Oy)$, $\cN_3(0, \Oy)$ and $\cN_3(0, 1.5\, \Oy)$ coefficients, respectively, while the other groups are zero.
\item \textit{Scenario 5 (sparse-group-sparse direct links, univariate responses).} Let $q=1$, $p=150$ and set $\oy=1$. We consider $m=3$ groups of size 50. Only the second group is non-zero, into which we randomly fill 10 locations with $\cN(0, \oy)$ coefficients.
\item \textit{Scenario 6 (sparse-group-sparse direct links, multivariate responses).} Let $q=5$, $p=1000$ and set $\Oy = 5\, C_5^{\, -1}$ with $\rho=0.5$. We divide the columns of $\D$ into $m=20$ groups of size 50, and a randomly chosen one is half filled with $\cN_5(0, \Oy)$ coefficients. The others columns of $\D$ are zero.
\end{itemize}

\begin{table}[!h]
\centering
\scriptsize
\begin{tabular}{c|c|c|c|c|c}
\hline
\hline
\multicolumn{6}{c}{Scenario 0} \\
\multicolumn{6}{c}{\tiny $\pi = 0$ \scriptsize} \\
\hline
\hline
Mod. & Shr. & MSPE & $F$ & $p_r$ & $r_e$ \\
\hline
(s-or) & - & 1.01 (0.11) & \underline{1.00} & 1.00 & 1.00 \\
(s) & (ad) & 1.03 (0.13) & \underline{1.00} & 1.00 & 1.00 \\
(s) & (gl) & 1.03 (0.13) & \underline{1.00} & 1.00 & 1.00 \\
PML & - & 1.01 (0.16) & \underline{1.00} & 1.00 & 1.00 \\
GLasso & - & \underline{1.00} (0.15) & \underline{1.00} & 1.00 & 1.00 \\
CLIME & - & \underline{1.00} (0.15) & \underline{1.00} & 1.00 & 1.00 \\
ANT* & - & 1.04 (0.13) & \underline{1.00} & 1.00 & 1.00 \\
\hline
\end{tabular}
~\\\medskip
\begin{tabular}{c|c|c|c|c|c}
\hline
\hline
\multicolumn{6}{c}{Scenario 1} \\
\multicolumn{6}{c}{\tiny $(a, b)=(p/2, 1)$ \scriptsize} \\
\hline
\hline
Mod. & Shr. & MSPE & $F$ & $p_r$ & $r_e$ \\
\hline
(s-or) & - & \underline{1.02} (0.13) & \underline{0.95} & 1.00 & 0.90 \\
(s) & (ad) & 1.04 (0.13) & \underline{0.95} & 1.00 & 0.90 \\
(s) & (gl) & 1.03 (0.13) & \underline{0.95} & 1.00 & 0.90 \\
PML & - & 1.08 (0.15) & 0.82 & 0.69 & 1.00 \\
GLasso & - & 2.37 (0.96) & 0.78 & 0.77 & 0.80 \\
CLIME & - & 2.52 (0.98) & 0.79 & 0.78 & 0.80 \\
ANT* & - & 1.25 (0.22) & 0.87 & 0.85 & 0.90 \\
\hline
\end{tabular}
\hsp
\begin{tabular}{c|c|c|c|c|c}
\hline
\hline
\multicolumn{6}{c}{Scenario 2} \\
\multicolumn{6}{c}{\tiny $(a, b)=(p, 1)$ \scriptsize} \\
\hline
\hline
Mod. & Shr. & MSPE & $F$ & $p_r$ & $r_e$ \\
\hline
(s-or) & - & \underline{0.52} (0.09) & \underline{0.95} & 1.00 & 0.90 \\
(s) & (ad) & 0.54 (0.09) & \underline{0.95} & 1.00 & 0.90 \\
(s) & (gl) & 0.55 (0.08) & \underline{0.95} & 1.00 & 0.90 \\
PML & - & 0.77 (0.15) & 0.86 & 1.00 & 0.75 \\
GLasso & - & 1.74 (0.49) & 0.72 & 0.91 & 0.60 \\
CLIME & - & 1.11 (0.35) & 0.73 & 0.76 & 0.70 \\
ANT* & - & 1.04 (0.44) & 0.90 & 0.89 & 0.91 \\
\hline
\end{tabular}
~\\\medskip
\begin{tabular}{c|c|c|c|c|c}
\hline
\hline
\multicolumn{6}{c}{Scenario 3} \\
\multicolumn{6}{c}{\tiny $(a, b)=(m, 1)$ \scriptsize} \\
\hline
\hline
Mod. & Shr. & MSPE & $F$ & $p_r$ & $r_e$ \\
\hline
(gs-or) & - & \underline{1.03} (0.27) & \underline{1.00} & 1.00 & 1.00 \\
(gs) & (ad) & 1.04 (0.27) & \underline{1.00} & 1.00 & 1.00 \\
(gs) & (gl) & 1.04 (0.34) & \underline{1.00} & 1.00 & 1.00 \\
PML & - & 1.80 (0.36) & 0.89 & 1.00 & 0.80 \\
GLasso & - & 4.23 (1.61) & 0.58 & 0.50 & 0.70 \\
CLIME & - & 2.98 (1.22) & 0.68 & 0.90 & 0.55 \\
ANT* & - & 1.52 (0.95) & \underline{1.00} & 1.00 & 1.00 \\
\hline
\end{tabular}
\hsp
\begin{tabular}{c|c|c|c|c|c}
\hline
\hline
\multicolumn{6}{c}{Scenario 4} \\
\multicolumn{6}{c}{\tiny $(a, b)=(m, 1)$ \scriptsize} \\
\hline
\hline
Mod. & Shr. & MSPE & $F$ & $p_r$ & $r_e$ \\
\hline
(gs-or) & - & \underline{0.40} (0.14) & \underline{1.00} & 1.00 & 1.00 \\
(gs) & (ad) & 0.45 (0.16) & \underline{1.00} & 1.00 & 1.00 \\
(gs) & (gl) & 0.46 (0.17) & \underline{1.00} & 1.00 & 1.00 \\
PML & - & 3.18 (0.53) & 0.75 & 0.94 & 0.62 \\
GLasso & - & 9.46 (1.38) & 0.46 & 0.66 & 0.35 \\
CLIME & - & 8.32 (1.51) & 0.48 & 0.45 & 0.52 \\
ANT* & - & 6.53 (1.22) & \underline{1.00} & 1.00 & 1.00 \\
\hline
\end{tabular}
~\\\medskip
\begin{tabular}{c|c|c|c|c|c}
\hline
\hline
\multicolumn{6}{c}{Scenario 5} \\
\multicolumn{6}{c}{\tiny $(a_1, b_1, a_2, b_2) = (m, 1, p/3, 1)$ \scriptsize} \\
\hline
\hline
Mod. & Shr. & MSPE & $F$ & $p_r$ & $r_e$ \\
\hline
(sgs-or) & - & \underline{1.00} (0.15) & \underline{0.96} & 1.00 & 0.92 \\
(sgs) & (ad) & 1.04 (0.16) & 0.95 & 1.00 & 0.91 \\
(sgs) & (gl) & 1.03 (0.16) & 0.91 & 1.00 & 0.84 \\
PML & - & 1.92 (0.60) & 0.89 & 1.00 & 0.80 \\
GLasso & - & 3.48 (1.30) & 0.78 & 0.86 & 0.71 \\
CLIME & - & 1.88 (0.92) & 0.79 & 1.00 & 0.65 \\
ANT* & - & 1.26 (0.98) & 0.88 & 0.86 & 0.90 \\
\hline
\end{tabular}
\hsp
\begin{tabular}{c|c|c|c|c|c}
\hline
\hline
\multicolumn{6}{c}{Scenario 6} \\
\multicolumn{6}{c}{\tiny $(a_1, b_1, a_2, b_2) = (m, 1, p/20, 1)$ \scriptsize} \\
\hline
\hline
Mod. & Shr. & MSPE & $F$ & $p_r$ & $r_e$ \\
\hline
(sgs-or) & - & \underline{0.21} (0.13) & \underline{1.00} & 1.00 & 1.00 \\
(sgs) & (ad) & 0.24 (0.32) & \underline{1.00} & 1.00 & 1.00 \\
(sgs) & (gl) & 0.24 (0.33) & \underline{1.00} & 1.00 & 1.00 \\
PML & - & 0.50 (0.17) & 0.83 & 0.95 & 0.74 \\
GLasso & - & 3.83 (0.77) & 0.50 & 0.97 & 0.34 \\
CLIME & - & 2.98 (0.51) & 0.51 & 1.00 & 0.34 \\
ANT* & - & 2.10 (0.72) & \underline{1.00} & 1.00 & 1.00 \\
\hline
\end{tabular}
\normalsize
\vspace{0.3cm}
\caption{Medians of the mean squared prediction errors (with standard deviations), $F$-scores, precisions and recalls after $N=100$ repetitions of Scen. 0 to Scen. 6 ($N=50$ for Scen. 4 and Scen. 6), with $n_e=400$ and uncorrelated predictors. The hyperparameters chosen for the prior spike probability are also indicated.}
\label{TabMSPE}
\end{table}

\begin{figure}[!h]
\centering
\includegraphics[width=10cm]{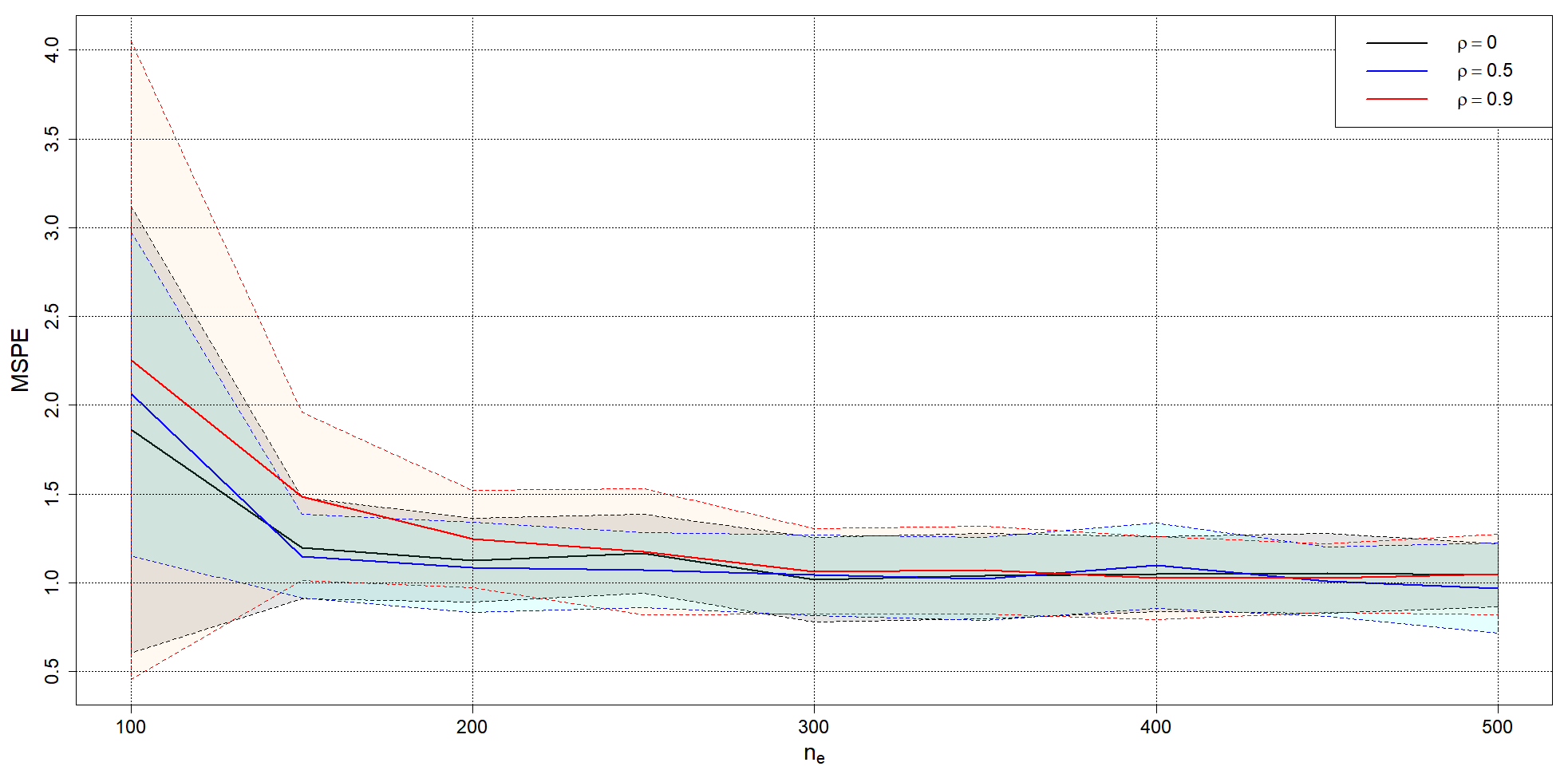} \includegraphics[width=10cm]{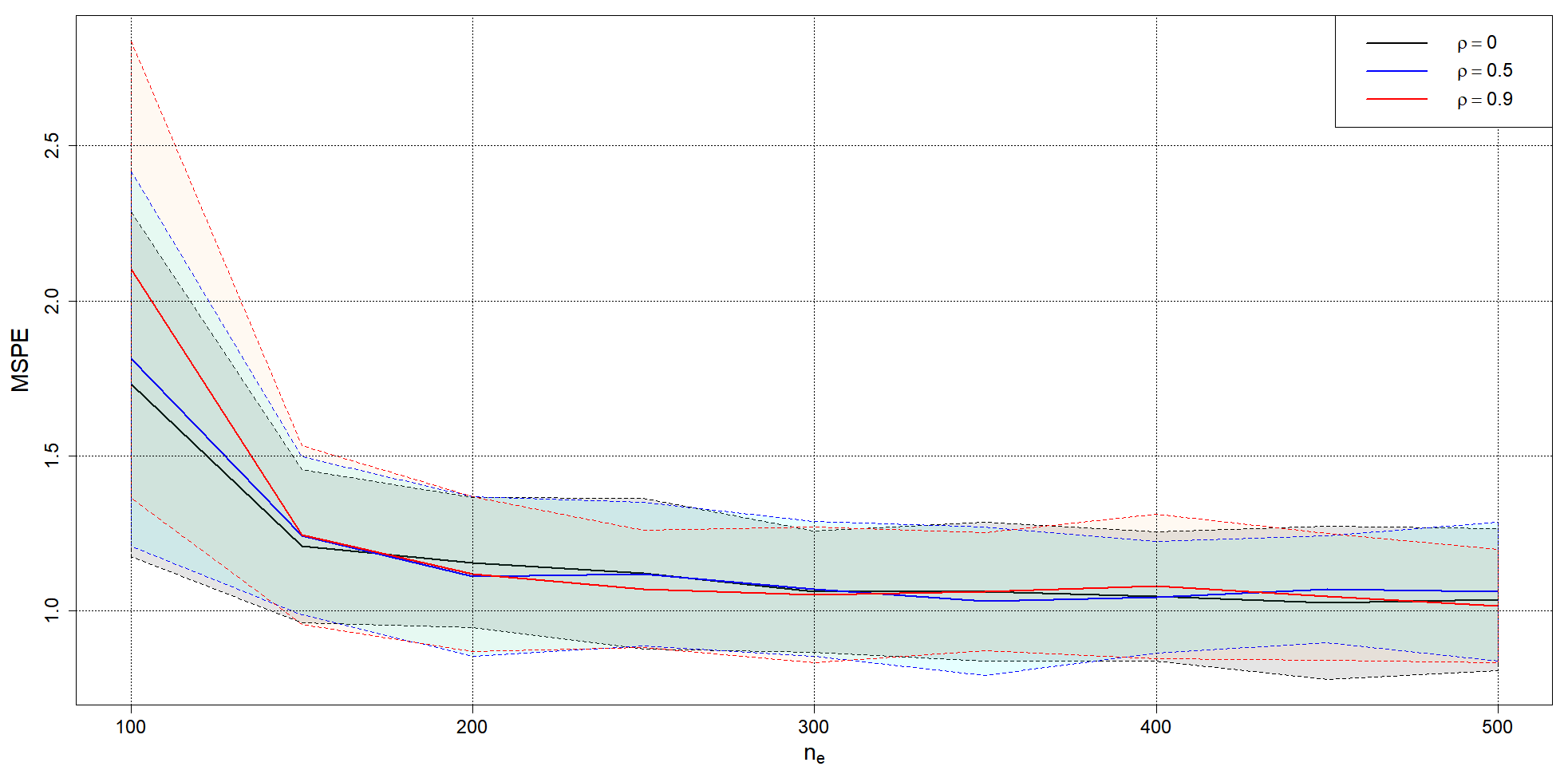} \includegraphics[width=10cm]{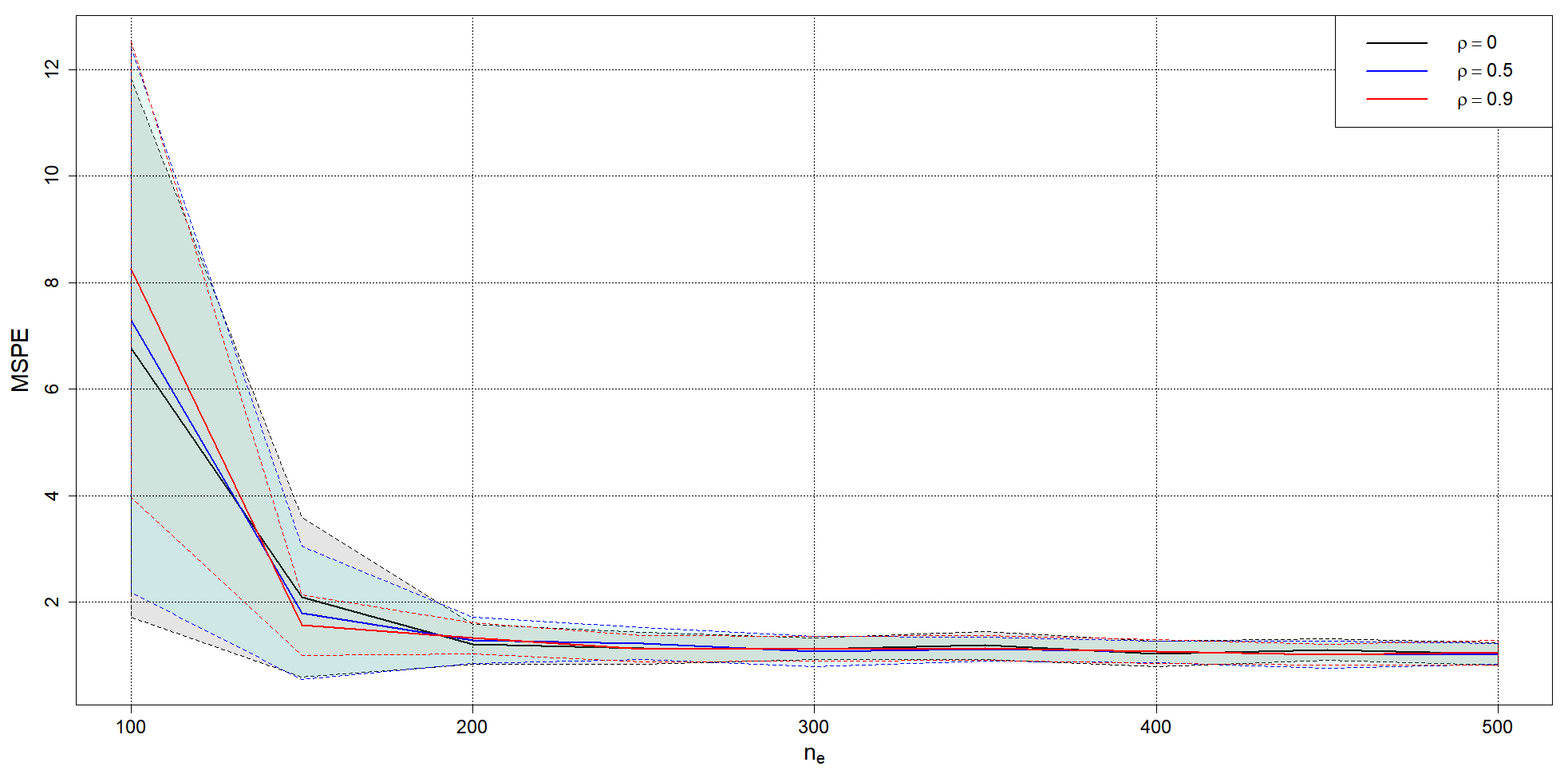}
\caption{Medians of the mean squared prediction errors obtained after $N=100$ repetitions of Scen. 1 (top), Scen. 3 (middle) and Scen. 5 (bottom) with $\pm 1$ standard deviation and $n_{e}$ growing from 100 to 500. The black curves correspond to uncorrelated predictors ($\rho=0$) while the blue and red curves correspond to correlated predictors ($\rho=0.5$ and $\rho=0.9$, respectively).}
\label{FigMSPE}
\end{figure}

Now, let us try to summarize our observations. In terms of support recovery, the Bayesian spike-and-slab framework and the modified ANT* method give results incomparably better than the sparsity-inducing penalized approaches (PML, GLasso and CLIME). As suggested in Rem. 3.3 of \cite{OkomeEtAl21}, this may be a consequence of the fact that the cross-validation steps calibrate the models to reach the best prediction error, sometimes at the cost of support recovery by picking a small penalty level. The superiority of ANT over GLasso and CLIME is recognized and discussed in \cite{RenEtAl15}, but this also highlights the ability of our Bayesian models to reach good results both in prediction and in support recovery. However as soon as $p$ is large (say, $p \geq 500$), the Bayesian studies should be conducted with a group structure or by promoting very sparse settings. Otherwise, due to the outline of the sampler, looping over each column of $\D$ may quickly become intractable. But, even when $p \gg n$ (see Scen. 4 and 6), the Bayesian grouped models give much better results than the penalized ones. The procedures are obviously very sensitive to the initialization of the sampler, especially when $p \gg n$. For example, the term $\vert I_{\kappa_g} + \lambda_g\, \gdX_g^{\, t}\, \gdX_g \vert$ is likely to explode when $\kappa_g$ is large and $\lambda_g > 1$, that is why $\lambda_g$ has to be carefully controlled \textit{via} an accurate initial choice of $\ell_g$. As long as we avoid those numerical pitfalls, the Bayesian models generally give very stable results (see \textit{e.g.} the standard deviations related to the MSPEs) which is probably due to the good performances in support recovery. More precisely, Figure \ref{FigMSPE} shows that this stability in the results is reached from $n_e=200$ observations in the learning set: for $n_e < 200$ the MSPEs are rather chaotic before stabilizing. The same figure also highlights that the presence of correlation in the predictors do not seem to have a significant effect on the estimation procedure, except for small size samples and high correlation where the degradation is noticeable. Unfortunately, despite all our efforts to optimize the Gibbs samplers, our procedures cannot compete with the Lasso-type algorithms (GLasso, CLIME or even ANT) in terms of computational times. This is an issue on which future studies should focus (ongoing works are devoted to translating the samplers into more efficient environments). Overall, the real strenght of the Bayesian spike-and-slab approach is clearly the support recovery of the direct links between predictors and responses but it seems that one can hardly expect to deal with very high-dimensional studies as long as we do not impose a group structure or a huge degree of sparsity. The highly competitive MSPEs obtained confirm the relevance of Bayesian PGGMs not only for variables selection but also for prediction purposes in the context of high-dimensional regressions.

\subsection{A real dataset}
\label{SecEmpReal}

Let us now study the \texttt{Hopx} dataset, fully described in \cite{PetrettoEtAl10}. It contains $p=770$ genetic markers spread over $m=20$ chromosomes from $n=29$ inbred rats. It also contains the corresponding measured gene expression levels of $q=4$ tissues (adrenal gland, fat, heart and kidney). The goal is to identity a sparse set of predictors that jointly explain the outcomes, with the natural group structure formed by chromosomes (see Table \ref{TabChr}). This dataset has already been analyzed in \cite{LiquetEtAl16}, using a Bayesian regression without group structure, and later in \cite{LiquetEtAl17} including group and sparse-group structures. So the PGGM is supposed to bring new perspectives about relationships in terms of partial correlations. A particularity of this dataset is that the responses are very correlated, so we should expect an estimation of $\Oy^{-1}$ with significant non-diagonal elements and a clear advantage in using PGGMs. Indeed, a predictor considered to be influencing all the outcomes could be the result of a direct relation to one tissue propagated to the others by an artificial correlation effect. As can be seen on Figure \ref{FigCorr}, the predictors are also highly correlated with their neighbors (for the sake of readability, we only represent the correlogram of predictors located on chromosomes 8, 9 and 10).

\begin{table}[!h]
\centering
\scriptsize
\begin{tabular}{|c|c|c|c|c|c|c|c|c|c|c|c|c|c|c|c|c|c|c|c|c|}
\hline
Chr. & 1 & 2 & 3 & 4 & 5 & 6 & 7 & 8 & 9 & 10 & 11 & 12 & 13 & 14 & 15 & 16 & 17 & 18 & 19 & 20 \\
\hline
Nb. & 74 & 67 & 63 & 60 & 39 & 45 & 52 & 43 & 31 & 51 & 21 & 26 & 33 & 22 & 15 & 27 & 18 & 30 & 34 & 19 \\
\hline
\end{tabular}
\normalsize
\vspace{0.3cm}
\caption{Number of markers on each chromosome, which correspond to the sizes $\kappa_g$ of each group for $1 \leq g \leq 20$ when running (gs) and (sgs).}
\label{TabChr}
\end{table}

\begin{figure}[!h]
\centering
\includegraphics[width=7.5cm]{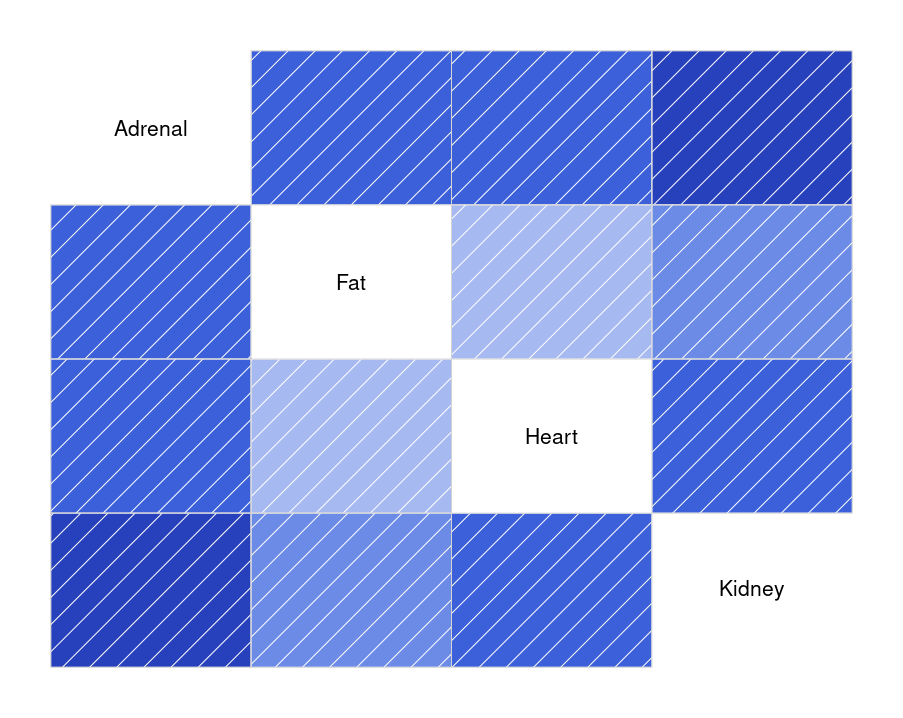} \includegraphics[width=7.5cm]{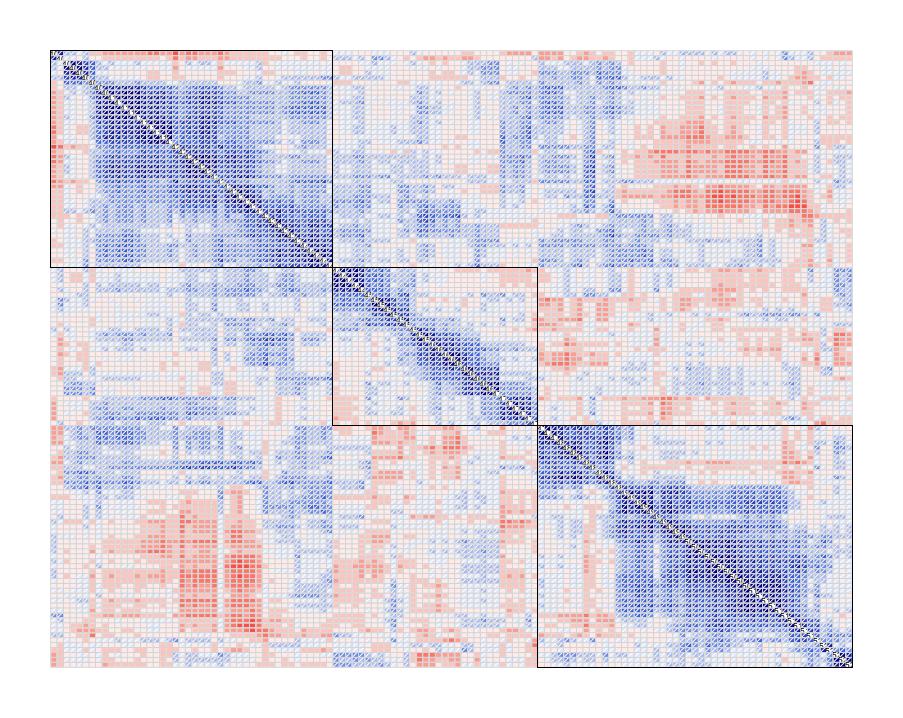}
\caption{Correlogram of responses (left) and correlogram of predictors located on chromosomes 8, 9 and 10 (right). The colormap associates red with negative correlations and blue with positive correlations.}
\label{FigCorr}
\end{figure}

The small sample size relative to the number of covariates ($29/770$) weakens the study. To strengthen our conclusions, we decided to run $N=100$ experiments based on 25 randomly chosen observations and to aggregate the results. We first investigate the selection of predictors at the chromosomes scale, \textit{i.e.} we run (gs) according to the previous protocol with an adaptative shrinkage. Due to the high-dimensionality, very sparse models are naturally selected at the groups scale so we `help' the selection process by choosing $a=1$ and $b=m$ in the prior probability $\pi$. The empirical distribution of the posterior probability of inclusion for each chromosome is represented on the left of Figure \ref{FigProbG}. The selection procedure focuses on chromosomes 14, 15 and 17 (and not just on chromosomes 2 and 3 as in \cite{LiquetEtAl17}) but the estimation process gives an overwhelming advantage to chromosome 14, far ahead of its neighbors. This is undoubtedly the influence of \texttt{D14Mit3}, a marker located on chromosome 14 and known to have a very significant effect on this dataset. The main conclusion to be drawn at this stage is that chromosome 14 has a positive effect on \texttt{Fat} and a \textit{negative} effect on \texttt{Heart}, as can also be seen on the right of Figure \ref{FigProbG}. Therefore, it is likely that the overall positive influence of \texttt{D14Mit3} identified by previous authors is due to the combination of a direct positive link with \texttt{Fat}, a direct negative link with \texttt{Heart} and a correlation effect from the outcomes. This hypothesis is given additional credibility by the numerical results: from (gs), the corresponding column of $\D$ is approximately $(0.00,\, 0.04, -0.09,\, 0.00)$ which, through relations \eqref{Reparam}, leads to $(0.15,\, 0.25,\, 0.34,\, 0.21)$ as estimated regression coefficients. This roughly corresponds to the values indicated in Tab. 2 of \cite{LiquetEtAl17}, at least for the main effect on \texttt{Heart}. Thus for chromosome 14, the numerical results coincide but the interpretations are clearly different. Of course, similar reasonings can be carried out for the less influent chromosomes.

\begin{figure}[!h]
\centering
\includegraphics[width=9cm, height=5cm]{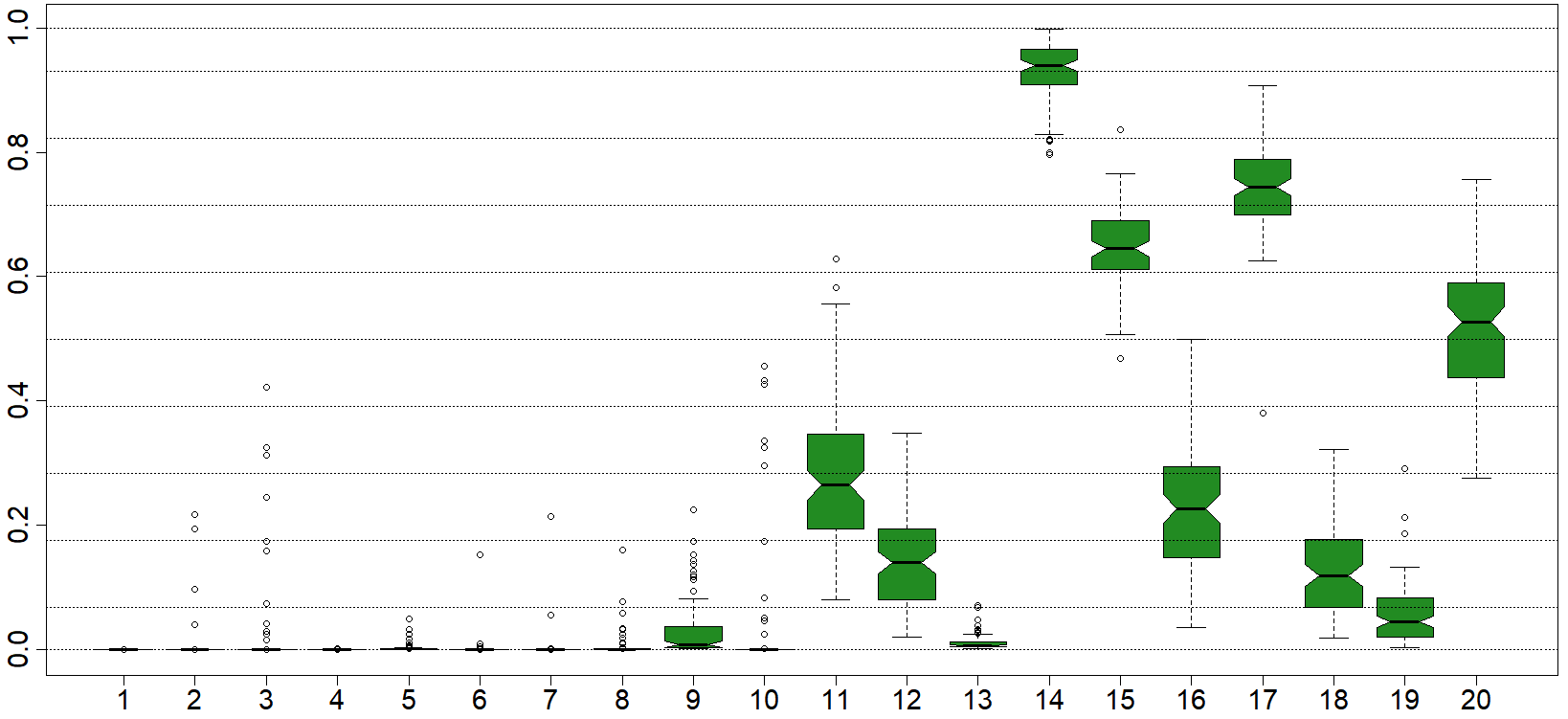} \hsp \includegraphics[width=5cm, height=5.1cm]{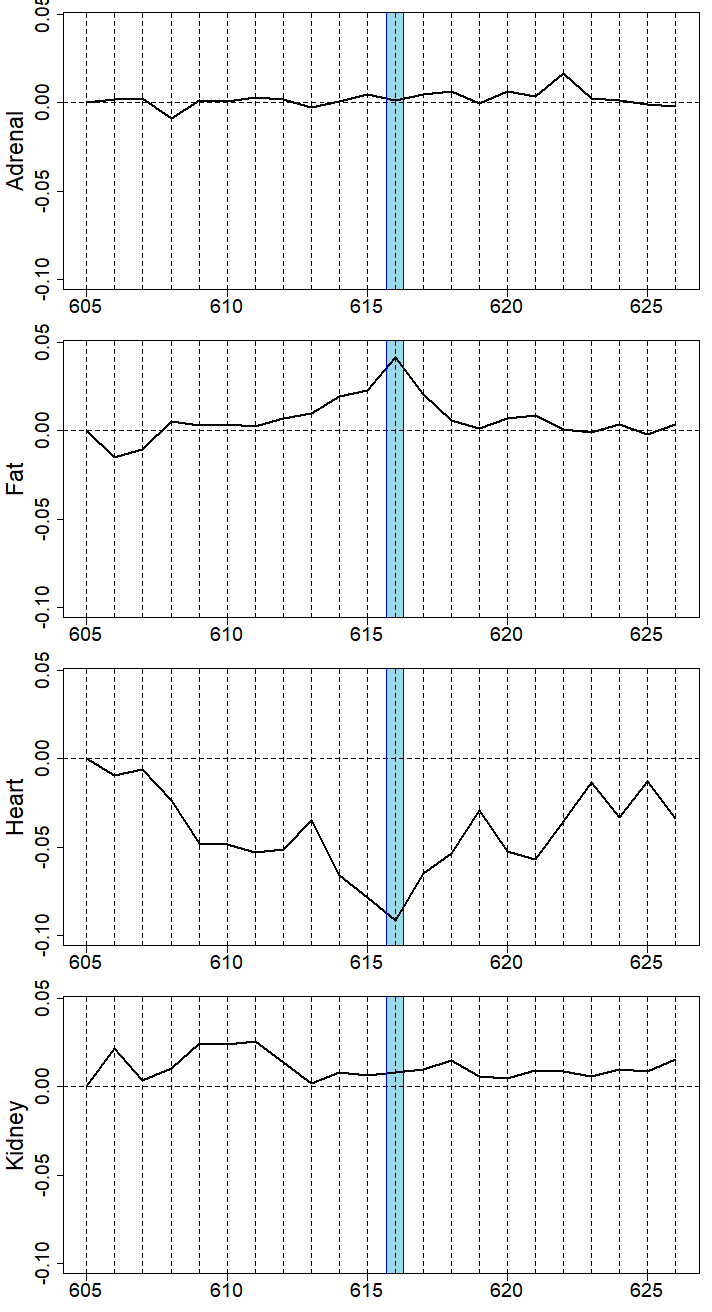}
\caption{Empirical distribution of the posterior probability of inclusion estimated by (gs) for each chromosome (left). Aggregated (gs) estimation of $\Delta$ on chromosome 14 with \texttt{D14Mit3} hilighted (right).}
\label{FigProbG}
\end{figure}

It is perhaps more interesting to look for a bi-level selection in order to identify a sparse set of markers and not only chromosomes. In this regard, (sgs) is launched using the same statistical protocol, adaptative shrinkage and hardly informative hypermarameters $a_1=3$, $b_1=1$, $a_2=1$ and $b_2=1$ which happen to be sufficient to generate a huge degree of sparsity. While many chromosomes are excluded from the model given by (gs), with (sgs) we see some contributions localized in certain chromosomes having little influence when taken as a whole. At the markers scale, the randomness of the sampler and the high level of correlation between close predictors probably explain the presence of artifacts which sometimes make it difficult to distinguish the real contributions from the background noise. We therefore use the $N=100$ experiments to build $95\%$ confidence intervals and keep only significant estimates. By way of example, Figure \ref{FigEstSGS} displays the results obtained on chromosomes 7, 8 and 14. The main markers standing out are summarized in Table \ref{TabChrSGS} together with the kind of direct influences detected. Markers already highlighted in \cite{LiquetEtAl16} or \cite{LiquetEtAl17} are also indicated. One can see that most of our conclusions coincide, but new markers are suggested (especially on chromosome 8) and others have disappeared. Overall, the more stringent statistical protocol that we used led to the retention of fewer predictors with more guarantee. An important consequence of this study is the new interpretations in terms of direct influences allowed by PGGMs. Especially as the residual correlations, hidden in the estimation of $R = \Oyi$ and closely related to the correlations between the responses, are very high (greater than $0.7$), as we suspected from Figure \ref{FigCorr}.

\begin{figure}[!h]
\centering
\includegraphics[width=5cm, height=5.1cm]{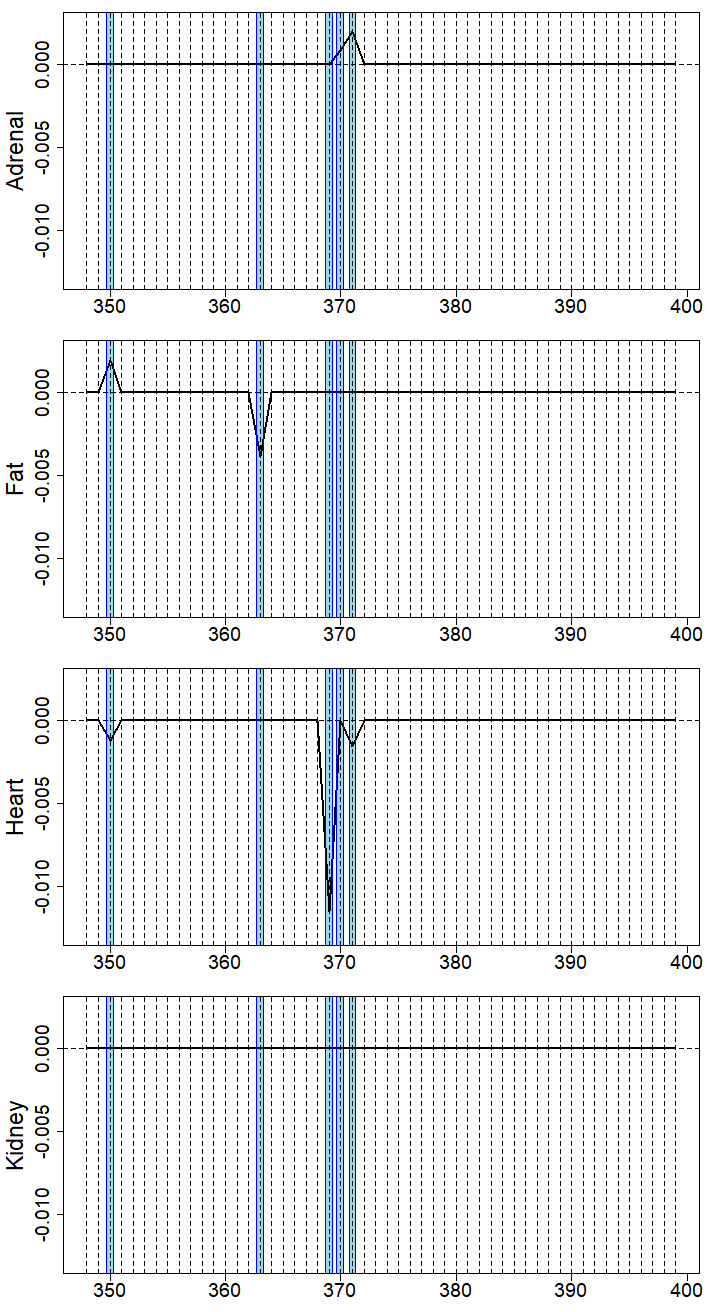} ~ \includegraphics[width=5cm, height=5.1cm]{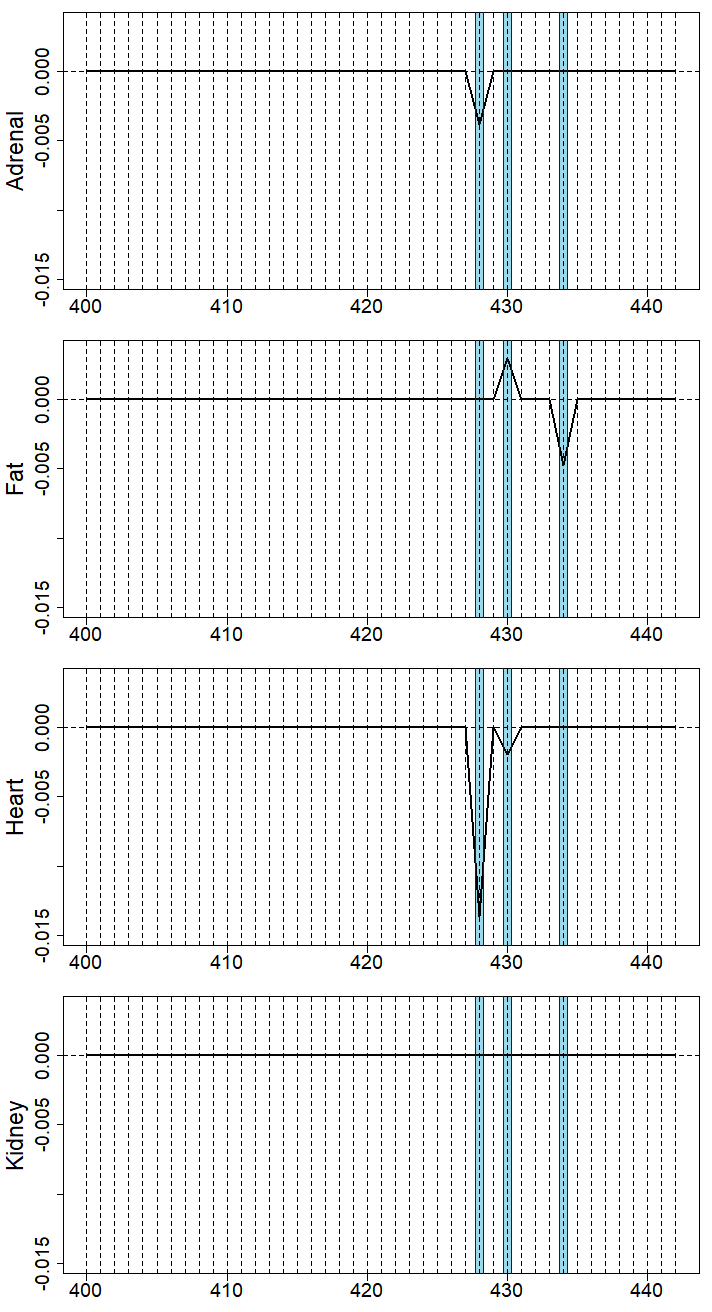} ~ \includegraphics[width=5cm, height=5.1cm]{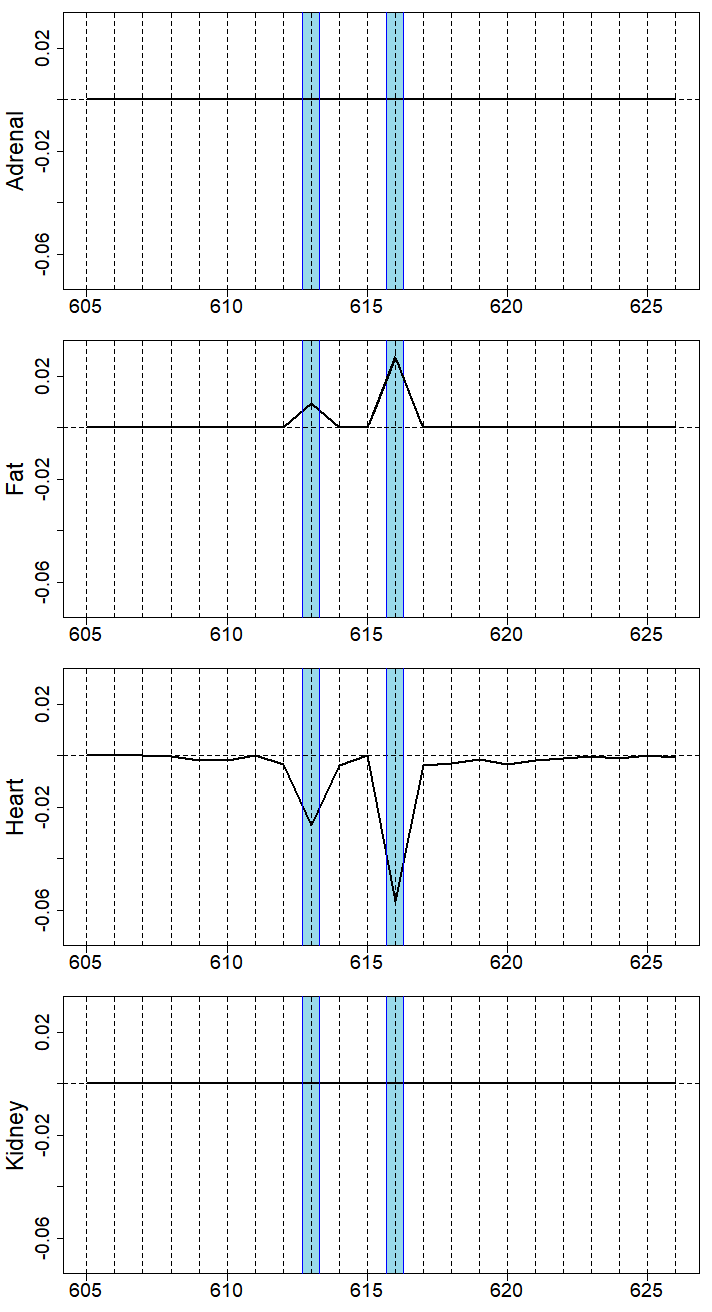}
\caption{Aggregated (sgs) estimation of $\Delta$ on chromosomes 7, 8 and 14, from left to right. The hilighted markers are \texttt{D7Cebr205s3}, \texttt{D7Mit6}, \texttt{D7Rat19}, \texttt{Myc} and \texttt{D7Rat17} for chromosome 7, \texttt{D8Mgh4}, \texttt{D8Rat135} and \texttt{Rbp2} for chromosome 8 and \texttt{D14Rat8} and \texttt{D14Mit3} for chromosome 14.}
\label{FigEstSGS}
\end{figure}

\begin{table}[!h]
\centering
\scriptsize
\begin{tabular}{|c|c|c|}
\hline
Chromosomes & Markers & Main direct influences \\
\hline
3 & \texttt{D3Mit16}* & \texttt{Adrenal}+ \texttt{Heart}-- \\
\hline
\multirow{5}{*}{7} & \texttt{D7Cebr205s3}* & \texttt{Fat}+ \texttt{Heart}-- \\
\cline{2-3}
 & \texttt{D7Mit6}* & \texttt{Fat}-- \\
\cline{2-3}
 & \texttt{D7Rat19}* & \texttt{Heart}-- \\
\cline{2-3}
 & \texttt{Myc}* & \texttt{Adrenal}+ \\
\cline{2-3}
 & \texttt{D7Rat17} & \texttt{Adrenal}+ \texttt{Heart}-- \\
\hline
\multirow{3}{*}{8} & \texttt{D8Mgh4} & \texttt{Adrenal}-- \texttt{Heart}-- \\
\cline{2-3}
 & \texttt{D8Rat135} & \texttt{Fat}+ \texttt{Heart}-- \\
\cline{2-3}
 & \texttt{Rbp2} & \texttt{Fat}-- \\
\hline
\multirow{3}{*}{10} & \texttt{D10Rat33}* & \texttt{Adrenal}+ \\
\cline{2-3}
 & \texttt{D10Mit3}* & \texttt{Adrenal}+ \\
\cline{2-3}
 & \texttt{D10Rat31}* & \texttt{Fat}-- \\
\hline
11 & \texttt{D11Rat47} & \texttt{Fat}-- \\
\hline
\multirow{2}{*}{14} & \texttt{D14Rat8}* & \texttt{Fat}+ \texttt{Heart}-- \\
\cline{2-3}
& \texttt{D14Mit3}* & \texttt{Fat}+ \texttt{Heart}-- \\
\hline
\multirow{2}{*}{15} & \texttt{D15Cebr7s13} & \texttt{Kidney}-- \\
\cline{2-3}
 & \texttt{D15Rat21}* & \texttt{Adrenal}+ \texttt{Kidney}-- \\
\hline
17 & \texttt{Prl} & \texttt{Adrenal}-- \texttt{Kidney}-- \\
\hline
20 & \texttt{D20Rat55} & \texttt{Kidney}-- \\
\hline
\end{tabular}
\normalsize
\vspace{0.3cm}
\caption{Main relations detected by (sgs). \texttt{X}* means that marker \texttt{X} has already been suggested by previous authors in this dataset. \texttt{Y}-- (\texttt{Y}+) means that response \texttt{Y} is negatively (positively) influenced by \texttt{X}.}
\label{TabChrSGS}
\end{table}

To conclude, we would like to draw the attention of the reader to some weaknesses of the study, already mentioned in Section \ref{SecEmpSim} and still under investigation. On the one hand, the sampler must be carefully initialized to avoid numerical issues, particularly for group structures. For example, our heuristic approach is to initialize $\ell_g$ such that $\dE[\lambda_g] < 1$ to control the behavior of $\vert I_{\kappa_g} + \lambda_g\, \gdX_g^{\, t}\, \gdX_g \vert$ during the first iterations. This works pretty well in practice, but needs to be done on a case-by-case basis, which could be improved. On the other hand, according to our experiments, the very high-dimensional studies (say, $p > 10^4$) cannot be conducted in a reasonable time thanks to our samplers, except by enforcing a very huge degree of sparsity. Enhanced MCMC methods may be useful or novel computational strategies like the `shotgun' stochastic algorithm of \cite{YangNarisetty20}. From a theoretical point of view, we should obviously improve the estimation procedure by sampling from the $\cMGIG_q$ distribution for $q > 1$, and not using the mode. Our fallback solution gives satisfactory but not completely rigorous results. In addition, it could be interesting to generalize the support recovery guarantee of Proposition \ref{PropSupRecGS} to (sgs), which is certainly possible at the cost of a few additional developments. Overall, our study shows that for the moderate values of $p$ (up to $10^3$ or $10^4$), the Bayesian approach of the partial Gaussian graphical models is a very serious alternative to the frequentist penalized estimations, for prediction but also and especially for support recovery.

\bigskip

\noindent \textbf{Acknowledgements and Fundings.} The authors thank ALM (Angers Loire M\'etropole) and the ICO (Institut de Canc\'erologie de l'Ouest) for the financial support. This work is partially financed through the ALM grant and the ``Programme op\'erationnel r\'egional FEDER-FSE Pays de la Loire 2014-2020" noPL0015129 (EPICURE). The authors also thank Mario Campone (project leader and director of the ICO), Mathilde Colombi\'e (scientific coordinator of EPICURE clinical trial) and Fadwa Ben Azzouz, biomathematician in Bioinfomics, for the initiation, the coordination and the smooth running of the project.

\nocite{*}

\bibliographystyle{acm}
\bibliography{PGGM_Bayes}

\end{document}